\title{Assigning Weights to Minimize the Covering Radius in the
  Plane\footnote{This research was supported by the MSIT(Ministry of
    Science and ICT), Korea, under the SW Starlab support
    program(IITP--2017--0--00905) supervised by the IITP(Institute for
    Information \& communications Technology Promotion) and the NRF
    grant 2011-0030044 (SRC-GAIA) funded by the government of Korea.}}
\author{Eunjin Oh\thanks{Max Planck Institute for Informatics, Saarbr\"ucken, Germany, Email: \texttt{eoh@mpi-inf.mpg.de}} \and Hee-Kap Ahn\thanks{Department of Computer Science and
    Engineering, Pohang, POSTECH, Korea, Email: \texttt{heekap@postech.ac.kr}}}
\newtheorem{theorem}{Theorem} 
\newtheorem{lemma}[theorem]{Lemma}
\newtheorem{corollary}[theorem]{Corollary}
\newcommand{\eps}{\ensuremath{\varepsilon}}
\newcommand{\wnw}{\ensuremath{W_{1}^{w}}}
\newcommand{\wone}{\ensuremath{W_1}}
\begin{document}
\date{}
\maketitle
\begin{abstract}
  Given a set $P$ of $n$ points in the plane and a multiset $W$ of $k$
  weights with $k\leq n$, we assign each weight in $W$ to a distinct point in $P$ to minimize
  the maximum weighted distance from the weighted center of $P$ to any
  point in $P$.  In this paper, we give two algorithms which take
  $O(k^2n^2\log^3 n)$ time and $O(k^5n\log^3k+kn\log^3 n)$ time,
  respectively.
  For a constant $k$, the second algorithm takes only $O(n\log^3n)$ time, which is near
  linear.
\end{abstract}

\section{Introduction}
Consider a set of robots lying at different locations in the plane.
Each robot is equipped with a locomotion module so that it moves to
a nearby facility to recharge its battery and returns to its original
location.  We want to place a recharging facility for the robots such
that the maximum travel time of them to reach the recharging facility
is minimized.  The Euclidean center of the robot locations may
not be a good location for the recharging facility if the robots have
huge differences in their speeds. For instance, consider three robots,
each lying on a different corner of an equilateral triangle. If one of
them has a much smaller speed compared to the speeds of the other two
robots, the best location is very close to the corner where the
low-speed robot lies.  In this case, the recharging facility must be located
at a \textit{weighted center} of the robots by considering their
speeds as weights of their placements.

In the weighted center problem, each input point is
associated with a positive weight. The \textit{weighted distance}
between an input point and a point of the plane is defined to be their
distance 
divided by the associated weight of the input point. Then the point of the plane that 
minimizes the maximum weighted distance to input points is
the center of the weighted
input points, which we call the \textit{weighted center}.
Dyer~\cite{weighted_center} 
gave a linear-time algorithm to compute the weighted center of a set of weighted points in the plane.
Clearly, the weighted center coincides 
with the (unweighted) center if the associate weight is the same for every
input point. 

Imagine now that we are allowed to \textit{reassign} the
locomotion modules of the robots.
Or, if the mobile robots are identical, except their speeds,
we are allowed to \textit{relocate} the robots.
A relocation of robots (or a reassignment of locomotion modules)
may change the weighted center and the maximum travel
time for mobile robots to reach the weighted center.
In other words, a clever assignment of robots (or their locomotion modules)
to given locations may decrease the objective function value.

In this paper, we formally define this relocation problem and present
algorithms for it. The \emph{weight assignment problem} is
defined as follows:
given an input consisting of a set $P$ of $n$ points in the plane
and a multiset 
$W=\{w_1,w_2,\ldots,w_k\}$ of $k$ weights of positive real values with $k\leq n$,
find an assignment of the weights in $W$ to input points such that the maximum 
weighted distance from the 
weighted center to input points is minimized. 
We assume that every input point of $P$ is assigned the default weight $1$.
We assign the $k$ weights to $k$ points of $P$ 
  such that every weight of $W$ is assigned to one of the $k$ points,
  each of the $k$ points gets one weight of $W$, and the remaining $n-k$
points of $P$ have the default weight
$1$. We call this an \textit{assignment of weights} of $W$
to $P$.

We regard an assignment of weights as a function.
To be specific, for an assignment $f$ of weights, let $f(p)$ denote
the weight assigned to a point $p \in P$.
We use $c(f)$ to denote the weighted center of $P$ by 
the assignment $f$, and  
call the maximum weighted distance from 
$c(f)$  to input points the \textit{covering radius} of the assignment $f$
and denote it by $r(f)$.

Obviously, there are $\binom{n}{k}$ different combinations of selecting
$k$ points from $P$ and $k!$ different ways of assigning the $k$ weights
to a combination, and therefore
there are $\Theta(n^kk!)$ different 
assignments of weights. 
Our goal is to find an assignment $f$ of the weights of $W$ to the points of $P$
that minimizes the covering radius $r(f)$ over all possible 
assignments of weights.

\subparagraph{Related Work.}  As mentioned earlier,
Dyer~\cite{weighted_center} studied the weighted center problem for a
set of weighted points in the plane. He reformulated the problem as an
optimization problem with linear inequalities and one quadratic
inequality. Then he gave a linear-time algorithm to compute the 
weighted center using the technique by Megiddo~\cite{Megiddo-linear}.
Later, Megiddo~\cite{Megiddo-balls} gave another linear-time algorithm
for the same problem using a different technique.

In contrast, to our best knowledge, no algorithm is known for the
weight assignment problem while there are works on several related
problems.  In the \emph{inverse 1-center problem} on graphs, we are
given a graph and a target vertex, and we are to increase or decrease
the lengths of edges of the graph so that the target vertex becomes a
center of the modified graph.  A center of a graph is a vertex
  of the graph which minimizes the maximum distance from other
  vertices in the graph. Notice that a modification in the lengths of the edges
  is equivalent to assigning additive weights to the edges.
The goal is to minimize the modification of the lengths.  
Cai et
al.~\cite{inverse_NPhard} showed that this problem is NP-hard on a
general directed graph.  Recently, Alizadeh and
Burkard~\cite{inverse_tree} gave an $O(n^2r)$-time algorithm for this
problem on a tree, where $r$ is the compressed depth of the tree.  A
variant of this problem is the \emph{reverse 1-center problem}, in
which we are to decrease the lengths of edges of the graph under a
given budget so that the maximum distance from a predetermined target vertex
to any vertex of the graph is minimized. This problem is also known to be NP-hard even on a
bipartite graph~\cite{reverse_NPhard}, and there is an
$O(n^2\log n)$-time algorithm on a tree by Zhang et
al.~\cite{reverse_tree}.

Our weight assignment problem is closely related to the weight
balancing problem which was studied by Barba et
al.~\cite{weight_balancing}.  The input consists of a simple polygon,
a target point inside the polygon, and a set of weights.  The goal is
to put the weights at points on the boundary of the polygon
so that the barycenter (center of mass) of the weights coincides with
the target point.  They showed the existence of such a placement of
weights under the condition that no input weight exceeds the sum of
the other input weights.  They also gave an algorithm to find such a
placement in $O(k + n\log n)$ time, where $k$ is the number of the
weights and $n$ is the number of the vertices of $P$.  Our problem can
be considered as a discrete version of this problem, but with a
different criteria (minimizing the covering radius), in the sense that
we place the weights on predetermined positions.

\subparagraph{Our Result.}  In this paper, we present two algorithms
 that compute an assignment $f$ of weights minimizing $r(f)$.  
 The first algorithm returns an optimal assignment of weights in $O(k^2n^2\log^3 n)$
time using $O(kn)$ space.  
  We first observe that the number of all possible weighted centers
  is $\Theta(k^3n^3)$ because a weighted center 
  is determined by at most three weighted points.
  To achieve the $O(k^2n^2\log^3 n)$-time algorithm, we use an
  algorithm deciding, for a given real value $r>0$, if there is an assignment
  $f$ of weights with $r(f)\leq r$. This algorithm makes use of the
  fact that there is a point $c$ satisfying $d(p,c)/f(p)\leq r$ for all
  points $p\in P$
  if and only if $r(f)\leq r$, where $d(p,c)$ denotes the Euclidean distance between $p$ and $c$.
  Moreover, there is a point $p\in P$ and a weight $w\in W\cup\{1\}$
  such that the circle $C$ centered at $p$ with radius $wr$ contains such a point $c$.
  Thus, the decision algorithm checks if there is such a point $c$ by
  considering each of the $O(kn)$ intervals on $C$ induced by the 
  concentric circles centered at each point in $P\setminus\{p\}$ with radius
  $w'r$ for each distinct weight $w'$. 

Then the overall algorithm finds an optimal assignment of weights
  and its covering radius by
  applying parametric search using the decision algorithm. In doing so,
  the algorithm computes the combinatorial structure of
  the arrangement of the circles $C_p$ for an optimal weight assignment $f^*$
  without knowing $f^*$ and its covering radius $r^*$,
  where $C_p$ is a circle centered at $p$ with radius $f^*(p)r^*$ for every
  $p\in P$.
  This is done by computing, for each pair of a point $p\in P$ and
  a weight $w\in W\cup\{1\}$, the sorted list of intersection points
  on the circle centered at a point $p$ with radius $wr^*$
  by the concentric circles 
  centered at each point in $P\setminus\{p\}$ with radius
  $w'r^*$ for $w'\in W\cup\{1\}$
  without knowing the optimal covering radius $r^*$. This takes $O(k^2n^2\log^3 n)$ time.
  Once the combinatorial structure is constructed,
  the algorithm computes an optimal assignment of weights and its covering
  radius in $O(k^2n^2\log^2n)$ time using $O(kn)$ space.
  This is done by sorting $O(k^2n^2)$ candidate radii defined by three circles
  of the arrangement and applying binary search on the sorted list using
  the decision algorithm.

The second algorithm computes an optimal assignment of weights in
$O(k^5n \log^3 k + kn\log^3 n)$ time assuming that 
every weight in $W$ is at most $1$.
The second algorithm is faster than the first algorithm when $k$ is sufficiently small
$(k=o(n^{1/3}))$.  Moreover, it takes only $O(n\log^3 n)$ time when
$k$ is a constant.

A merit of our algorithms is that they are based on useful 
geometric intuition and are easy to implement though they use  
parametric search, the optimization technique developed by 
Megiddo~\cite{parametric}.
The parametric search technique is an important tool for solving
many geometric optimization problems efficiently, but algorithms based on it
are often not easy to be implemented~\cite{Agarwal1995}. 
A main difficulty lies in computing the roots of the polynomials exactly whose signs 
determine the outcome of the comparisons made by the algorithm. 
However, as we will see later,
in our algorithms, such a root is a covering radius
of at most three weighted points, so we can compute it easily
 instead of resorting to complicated methods.

\section{Preliminaries}
\label{sec:example}
For any two points $p$ and $q$ in the plane, we use $d(p,q)$
  to denote the Euclidean distance between $p$ and $q$.
  For a weighted point $p$ in the plane, let $w(p)$ denote the weight of $p$.
  For a weighted point $p$ and an unweighted point $q$ in the plane, their weighted
  distance is defined as $d(p,q)/w(p)$. For a set of weighted points, the weighted
  center of the weighted points is a point in the plane that minimizes the maximum
  weighted distance to the weighted points.
\begin{figure}
  \begin{center}
    \includegraphics[width=.8\textwidth]{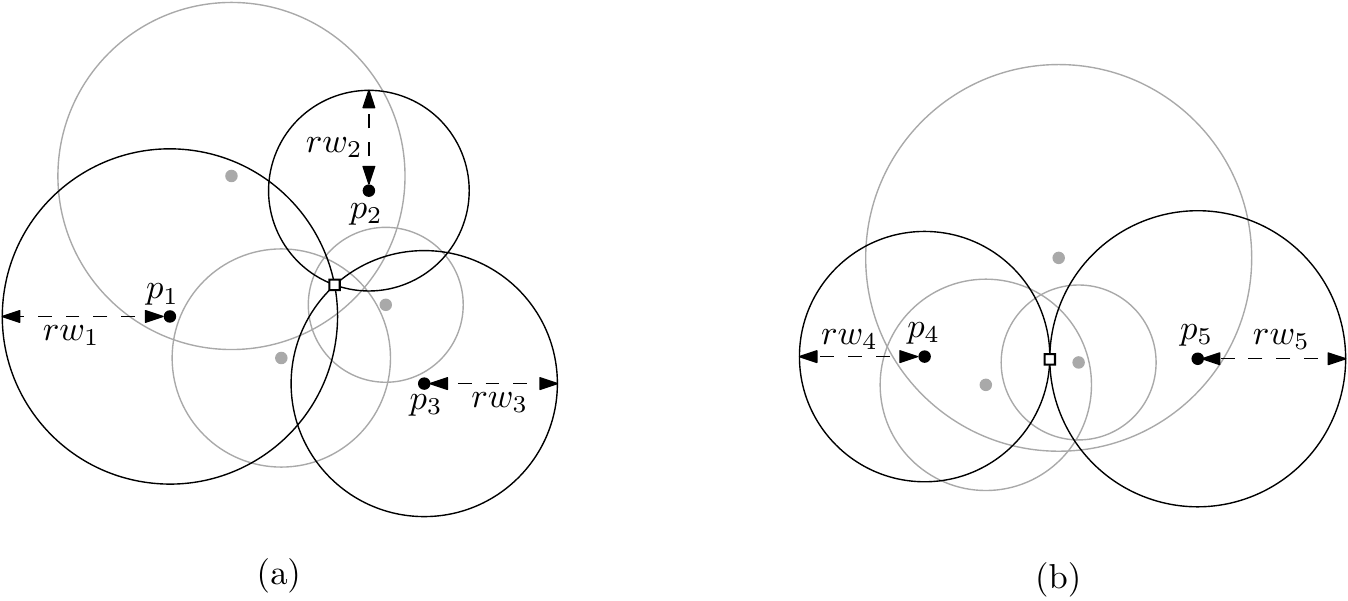}
    \caption {(a) The weighted center of three points $p_i$ with weight
        $w_i$ for $i=1,2,3$ is the weighted center of whole weighted points.
        (b) The weighted center of two points $p_4$ with weight $w_4$
        and $p_5$ with weight $w_5$ is the weighted center of whole weighted points.}
    \label{fig:arrangement_structure}
  \end{center}
\end{figure}

Let $P$ be a set of $n$ points in the plane, and let
$W=\{w_1,w_2,\ldots,w_k\}$ be a multiset of $k$ weights of positive real values.
To ease the description, we assume that $k<n$ unless stated otherwise.
In case that $k=n$, there is no difference in the proposed algorithm, except that
every point of $P$ is assigned a weight of $W$. 
One way to deal with our problem is to find the weighted
centers for all possible assignments of weights and choose 
the one with the minimum covering radius.  Although there are 
$\Theta(n^kk!)$ different assignments of weights, 
there are only $O(k^3n^3)$ different weighted
centers. This is because a weighted center is determined by at most
three weighted points. That is, given an assignment of weights, there always
exist at most three weighted points of $P$ whose weighted center coincides with the
weighted center of the whole weighted points.
Figure~\ref{fig:arrangement_structure}
  illustrates the cases that
  the weighted centers determined by three or two weighted points of $P$.

Thus the number of all possible weighted centers is at most the number of all possible 6-tuples 
$(p_1,p_2,p_3,w_1,w_2,w_3)$ such that $p_i \in P$ and $w_i \in W \cup \{1\}$ for $i=1,2,3$, 
which is $O(k^3n^3)$.
Moreover, this bound is asymptotically
tight by the following lemma.

\begin{lemma}
	\label{lem:lower-bound}
	There are $n$ points in the plane and $k$ weights
	for which the number of all possible weighted centers is $\Omega(k^3n^3)$.
\end{lemma}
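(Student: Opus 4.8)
The plan is to exhibit an explicit configuration of $n$ points and $k$ weights realizing $\Omega(k^3n^3)$ distinct weighted centers. I would aim to produce $\Omega(n^3)$ distinct weighted centers already from the three-point (circumcircle-type) case with a fixed choice of which three points carry weights, and then argue that varying the $k$ weights among three of the points multiplies this count by $\Omega(k^3)$.

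First I would set up the point set. Take a cluster $A$ of $n-3$ points in ``general position'' packed into a tiny disk of radius $\delta$, and three special points $a_1,a_2,a_3$ placed far away, forming a large near-equilateral triangle. Assign weights to $a_1,a_2,a_3$ only (all other points keep weight $1$); since the cluster $A$ is tiny and distant while the $a_i$ are spread out, one checks that for suitable weight choices the weighted center is governed by the three weighted points $a_1,a_2,a_3$, i.e. it is the point $c$ equidistant-in-weighted-sense from the three of them, and the tiny cluster stays strictly inside the covering radius. Next, count: as $(w_1,w_2,w_3)$ ranges over the $\binom{k}{3}\cdot 3! = \Theta(k^3)$ ordered assignments of three of the $k$ weights to $(a_1,a_2,a_3)$, the weighted center $c(w_1,w_2,w_3)$ moves; a transversality/Jacobian computation on the two defining equations $d(a_i,c)/w_i = d(a_j,c)/w_j$ shows that distinct weight triples give distinct centers for a generic placement of $a_1,a_2,a_3$. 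That already yields $\Omega(k^3)$ centers — I still need the extra $\Omega(n^3)$ factor.

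To get the $n^3$ factor, I would instead do the following: keep the weights fixed on a triple but let the \emph{triple of points} vary. Place the $n$ points on (a small arc of) a curve — say on a circle or a slightly perturbed line, no four concyclic in the weighted sense — and fix a generic weight triple $(w_1,w_2,w_3)$. For each ordered triple $(p_i,p_j,p_\ell)$ of distinct points, the weighted center of those three weighted points is a well-defined point (the intersection of two Apollonius circles), and a counting/perturbation argument shows these $\Theta(n^3)$ points are pairwise distinct for a generic configuration: two ordered triples sharing the same weighted center would force an algebraic coincidence that we can perturb away, and the number of such coincidences is a lower-order polynomial in $n$ that we avoid. Combining the two constructions — a cluster realizing $\Omega(k^3)$ via weight variation, together with $\Omega(n^3)$ point-triples for a fixed weight triple, arranged so the two mechanisms act on disjoint ``regions'' and do not collapse each other — gives $\Omega(k^3 n^3)$ distinct weighted centers; writing $n = \Theta(n)$ points total (a cluster of size $\Theta(n)$ plus the $\Theta(n)$ curve points, or simply reusing $\Theta(n)$ points for both roles) keeps the bound $\Omega(k^3 n^3)$.

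The main obstacle I expect is the genericity argument: verifying that distinct $6$-tuples $(p_i,p_j,p_\ell,w_a,w_b,w_c)$ really do yield distinct weighted centers, rather than accidentally coinciding. Concretely, the weighted center of three weighted points is the solution of a system of two equations each of degree $2$ (Apollonius circles), so the locus of ``bad'' parameter choices where two $6$-tuples agree is an algebraic subvariety of positive codimension; I would show it is a proper subvariety and then pick the point coordinates (and, if one wishes, perturb the weights to distinct generic values) outside it, invoking that a proper algebraic subvariety of $\mathbb{R}^{2n}$ has measure zero. A secondary technical point is confirming that the weighted center of the whole instance is actually \emph{determined by} the intended three points (so that these $\Omega(k^3n^3)$ points genuinely occur as weighted centers of assignments), which is handled by the cluster-far-away / scale-separation geometry sketched above together with Dyer's characterization that the weighted center is determined by at most three of the weighted points.
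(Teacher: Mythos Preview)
Your proposal has a genuine gap, and it is precisely the point you flag as a ``secondary technical point''. For a $6$-tuple $(p_1,p_2,p_3,w_1,w_2,w_3)$ to contribute to the count, the weighted center of those three weighted points must be the weighted center of the \emph{entire} weighted point set under some assignment; that is, every remaining point, with whatever weight it receives from $W_1$, must lie within the covering radius. Your scale-separation argument (tiny cluster plus three far special points) handles this only when the three determinators are fixed. Once you let the point triple range over $\Theta(n)$ points on a single curve, the other $n-3$ points sit at the \emph{same} scale as the chosen triple, and there is no separation to invoke. Concretely, if the $n$ points lie on one circle and the $k$ weights are slightly below~$1$, then after you place three weights on a triple, the remaining $k-3$ weights still have to go somewhere; any point receiving a weight smaller than the three chosen ones will have weighted distance exceeding the covering radius, so the triple does \emph{not} determine the global center. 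Your genericity argument addresses distinctness of the candidate centers but not their realizability, and realizability is where the construction actually has to do work.

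A second, related issue: producing one construction that yields $\Omega(k^3)$ centers and another that yields $\Omega(n^3)$ centers does not give $\Omega(k^3n^3)$; the counts add, they do not multiply. You need a single configuration in which the point triple and the weight triple vary simultaneously \emph{and} every resulting $6$-tuple is feasible in the sense above. The paper's construction achieves exactly this: $\Theta(n)$ points are placed in three well-separated clusters on a large circle $C_1$, with the remaining $\Theta(n)$ points on a small concentric circle $C_2$, and all weights are chosen near $1/2$. For any triple (one point per cluster) and any weight triple, the weighted center lands inside $C_2$; the other outer points carry weight~$1$ and are covered because the chosen points have weight at most $1/2$, while the inner points absorb the leftover small weights harmlessly because they are close to the center. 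The inner circle is the missing ingredient in your sketch: it provides a place to park the unused small weights so they never violate the covering-radius constraint.
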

\begin{proof} 
  Figure~\ref{fig:tight} illustrates an example in the plane with $n$ points and $k$ weights for
  which the number of all possible weighted centers is
  $\Omega(k^3 n^3)$. Consider two concentric circles such that
  	the smaller circle has a radius at most $1/3$ of the radius of the larger circle.
  	We put $\lfloor n/4\rfloor$ points
  along the small circle $C_2$ evenly and put the remaining points along
  the large circle $C_1$ such that they are clustered into three
  groups, each consisting of at most $\lceil n/4\rceil$ points, as illustrated in the figure.  Any
  two points in the same group are close to each other while any two
  points from two different groups are far from each other.  Let
  $\eps>0$ be a sufficiently small real value and $k$ be a submultiple of $n$ satisfying $k\leq n/4$, 
  and let $W=\{1/2-\eps, 1/2-2\eps,\ldots,1/2-k\eps\}$.

  Every triplet $(p_1,p_2,p_3)$ of points, one point from each group
  on $C_1$, and every triplet $(w_1,w_2,w_3)$ of weights of $W\cup\{1\}$ define
  their weighted center if we assign weight $1$ to all points on $C_1$
  other than the three points. To see this, we first observe that the
  weighted center $c$ of the three points $p_i$ with weights $w_i$ for
  $i=1,2,3$ lies inside $C_2$ by setting $\eps$ to be sufficiently small.
  
  We first show that $d(p_1,c)/w_1> d(q,c)$ for every point $q$ lying on $C_1$ other than
  $p_1,p_2$ and $p_3$. We have $d(p_1,c)>d(q,c)/2$ since $c$ lies
  inside $C_2$ and the radius of $C_2$ is at most $1/3$ of the radius of $C_1$. Therefore, the claim holds
  because $w_1$ is at most $1/2$.

  Now we show that $d(p_1,c)/w_1>d(q,c)/w$ for every point $q$ lying on $C_2$ and 
  every weight $w$ in $W\cup\{1\}$.
  We have $d(p_1,c)/2 > d(q,c)$. By setting $\eps >0$ to be sufficiently small 
  satisfying $1/(1-2k\eps) < 2$,
  $\max_{w,w'\in W} w/w'$ is strictly less than $2$.  Thus,
  $d(p_1,c)/w_1 > d(q,c)/w $ for any point $q$  lying on $C_2$ and any weight
  $w \in W\cup\{1\}$, which proves the claim.

  Moreover, we can choose $\eps$ such that no two weighted centers lie
  on the same point. Therefore, the number of different weighted
  centers is $\Omega(k^3n^3)$. 
\end{proof}
\begin{figure}
	\begin{center}
		\includegraphics[width=0.3\textwidth]{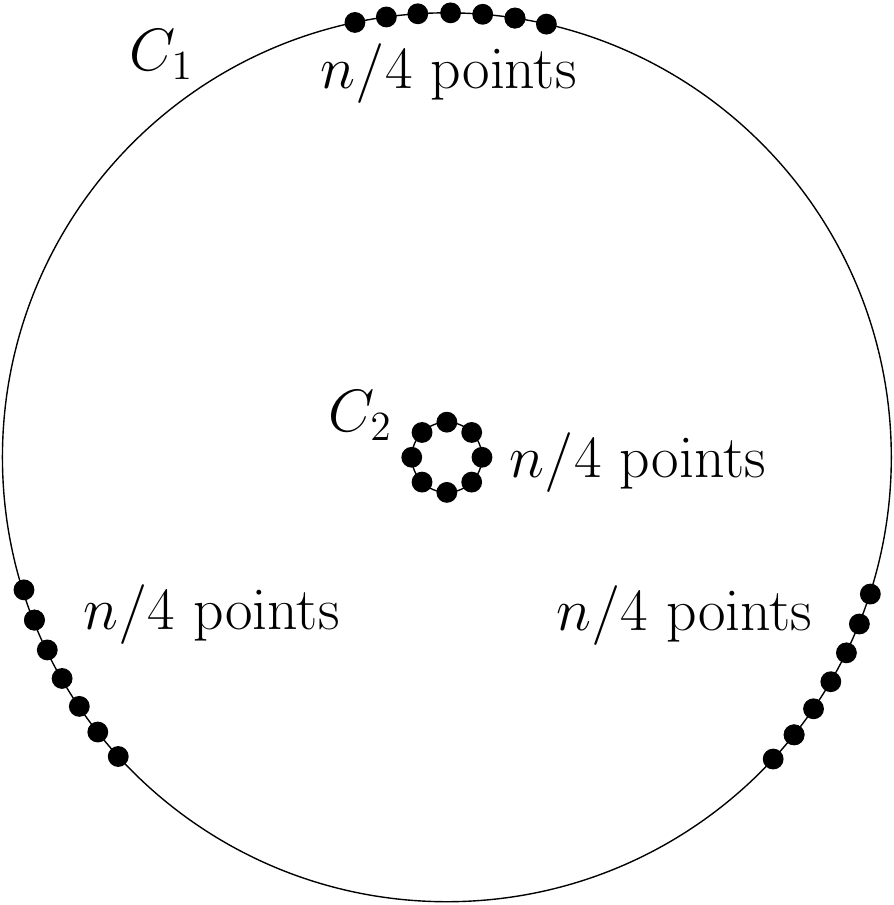}
		\caption {A example of having $\Theta(k^3n^3)$ different weighted centers: $n$ points and the set $W=\{1/2-\eps, 1/2-2\eps,
			\ldots, 1/2-k\eps\}$ of $k$ weights.}
		\label{fig:tight}
	\end{center}
\end{figure}

\subsection{Deciding Feasibility for a Weighted Center of Three Points}
As noted earlier, the weighted center and the covering radius of an
assignment of weights are determined by at most three weighted points.
But not every three weighted points define such a center and its covering radius.
Here we show how to test this for three weighted points efficiently. 

Let $\wone$ be the multiset consisting of the weights of $W$ and $n-k$
  numbers of weight $1$. 
Consider three points from $P$ and an
assignment of three weights from $\wone$ to the points. We first test
whether the three weighted points define a point in the plane at the
same weighted distance from them. This can be done in constant
  time by checking if the three bisecting curves, each curve
  $\{x\in\mathbb{R}^2 \mid d(p,x)/w(p)=d(q,x)/w(q)\}$ defined by a
  pair $(p,q)$ of the three weighted points, have a common point.  If
such a point does not exist, there is no weighted center of the three
weighted points.
Otherwise, there is at most one such point 
as the weighted center of points in the plane is unique.
  We can decide in constant time if the common point of three bisecting
  curves is the weighted center of the three weighted points
  by checking if the point is contained in
  the convex hull of the three weighted points.

Let $c$ denote the weighted center and $r$ denote the covering radius of the 
three weighted points. 
Let $\langle p_1,\ldots,p_{n-3} \rangle$ be
the sequence of the points in $P$, except the three (weighted) points, in
increasing order with respect to the Euclidean distance from $c$.
Let $\langle w_1,\ldots,w_{n-3} \rangle$ be the sequence of the
weights in $\wone$, except the three weights, in increasing order.
The following lemma directly
gives us an $O(n)$-time algorithm to decide whether
the three weighted points determine the weighted center
and the covering radius of an assignment of weights assuming that we have the two sequences.
\begin{lemma}
  \label{lem:validity}
  There exists an assignment $f$ of weights such that $c(f)=c$ and $r(f)=r$ if and only
  if $d(p_i,c)/w_i \leq r$ for $1\leq i\leq n-3$.
\end{lemma}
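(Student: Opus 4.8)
The plan is to prove the two directions of the equivalence separately, after first isolating the combinatorial heart of the statement: that once $c$ and $r$ are fixed, realizing $(c,r)$ amounts to matching the $n-3$ points of $P\setminus\{q_1,q_2,q_3\}$ (the three points whose weighted center is $c$, carrying weights $v_1,v_2,v_3$) to the $n-3$ leftover weights of $\wone$ so that every weighted distance to $c$ is at most $r$. Writing $\rho_i := d(p_i,c)/r$ for the ``requirement'' of $p_i$, so that $d(p_i,c)/w\le r \iff w\ge\rho_i$, I would first state and prove a rearrangement lemma: with $\rho_1\le\cdots\le\rho_{n-3}$ and $w_1\le\cdots\le w_{n-3}$ both sorted increasingly, a bijection $\pi$ with $\pi(p_i)\ge\rho_i$ for all $i$ exists if and only if $w_i\ge\rho_i$ for all $i$. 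The ``if'' part is immediate (take the identity matching of the two sorted sequences). For the ``only if'' part I would argue by contradiction: if $w_j<\rho_j$ for some $j$, then each of the $n-2-j$ points $p_j,\dots,p_{n-3}$ has requirement $\ge\rho_j>w_j$, hence in any valid matching must be assigned one of the at most $n-3-j$ weights that are strictly larger than $w_j$ --- impossible by counting. This is exactly Hall's condition made explicit through the sorted order, and ties among the $\rho_i$ or $w_i$ only shrink the admissible pool, so they cause no difficulty.

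For the ``if'' direction of the lemma, I would assume $d(p_i,c)/w_i\le r$ for all $i$ and take $f$ to assign $v_j$ to $q_j$ for $j=1,2,3$ and $w_i$ to $p_i$ otherwise. Every point of $P$ is then at weighted distance at most $r$ from $c$, and equality is attained among $q_1,q_2,q_3$ because $r$ is by definition their covering radius; hence the maximum weighted distance from $c$ under $f$ is exactly $r$. To upgrade this to $c(f)=c$ and $r(f)=r$, I would invoke uniqueness of the weighted center (already used in the preceding discussion): since $c$ is the weighted center of the three weighted points, every $c'\ne c$ satisfies $\max_{j=1,2,3} d(q_j,c')/v_j>r$, so $c'$ cannot match $c$ on all of $P$ either, and $c$ is the unique minimizer of the maximum weighted distance under $f$.

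For the ``only if'' direction, I would take any assignment $f$ with $c(f)=c$ and $r(f)=r$. Then $d(p,c)/f(p)\le r(f)=r$ for every $p\in P$; restricting to $P\setminus\{q_1,q_2,q_3\}$, the map $f$ is a bijection onto the multiset $\langle w_1,\dots,w_{n-3}\rangle$ satisfying $f(p_i)\ge\rho_i$, i.e., a valid matching. The rearrangement lemma then gives $w_i\ge\rho_i$, that is $d(p_i,c)/w_i\le r$, for all $i$, as required.

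The main obstacle I expect is not any single computation but the bookkeeping around what ``an assignment realizing $(c,r)$'' means: one must be careful that such an $f$ extends the fixed partial assignment on $q_1,q_2,q_3$, that the remaining weights are precisely the multiset $\langle w_1,\dots,w_{n-3}\rangle$, and that the uniqueness of the weighted center is genuinely available to turn ``$c$ attains maximum weighted distance $r$'' into ``$c$ is the weighted center under $f$''. The rearrangement lemma itself is routine, but I would prove it explicitly rather than cite it, so that the sorted-sequence formulation in the statement is justified with the correct treatment of repeated weights and repeated distances.
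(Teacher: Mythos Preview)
Your proposal is correct and follows essentially the same approach as the paper. Both directions match: for the ``if'' direction you and the paper construct $f$ by assigning $w_i$ to $p_i$ (you add the explicit appeal to uniqueness of the weighted center to conclude $c(f)=c$, which the paper simply asserts); for the ``only if'' direction your rearrangement/counting argument is the same pigeonhole reasoning the paper carries out, except the paper phrases it by exhibiting a specific witness $p_{j'}$ with $j'>j$ and $f(p_{j'})\le w_j$ rather than invoking a stated lemma. Your flagged bookkeeping concern---that $f$ is tacitly assumed to extend the fixed assignment on $q_1,q_2,q_3$ so that $f$ restricted to the remaining points is a bijection onto $\{w_1,\dots,w_{n-3}\}$---is indeed an implicit assumption in the paper's proof as well.
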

\begin{proof}
  One direction is straightforward. If $d(p_i,c)/w_i \leq r$ for every
  index $1 \leq i \leq n-3$, we assign weight $w_i$ to point
  $p_i$ so that $f(p_i)=w_i$  for each $i$. 
  Then this assignment $f$, together with the weight assignment for the three points, 
  satisfies that $c(f)=c$ and $r(f)=r$.
  
  Consider the other direction. Suppose that there exists an
  assignment $f$ such that $c(f)$ coincides with $c$ and $r(f)=r$, but 
  $d(p_i,c)/w_i > r$ for some $i$ with $1\leq i \leq n-3$.
  Let $j$ be the smallest index such that $d(p_j,c)/w_j > r$.  
  Then $f(p_j)=w_\ell$ for some index $\ell>j$ because $d(p_j,c)/w_{\ell'}>r$ for
  every $\ell'\leq j$.
  Moreover, there is a point $p_{j'}$ 
  for $j<j'\leq n-3$ such that $f(p_{j'})\leq w_j$. This is because
  the number of weights strictly larger than $w_j$ is less than $n-3-j$.
  Since $d(p_i,c)/f(p_i)\leq r$ for all $1 \leq i \leq n-3$, we
  have $d(p_{j'},c)/f(p_{j'})\leq r$. But this implies $d(p_j,c)/w_j \leq r$ 
  because $d(p_j,c)\leq d(p_{j'},c)$ and $f(p_{j'})\leq w_j$, 
  which contradicts the assumption
  that $d(p_j,c)/w_j > r$.
  Therefore, we have $d(p_i,c)/w_i \leq r$ for all indices $1\leq i
  \leq n-3$ if there exists an assignment $f$ with $c(f)=c$ and $r(f)=r$.
\end{proof}

It is possible that two weighted points $p$ and $q$ determine the weighted center of the whole
weighted points. In this case, the weighted center lies in the line passing through $p$ and $q$.
To handle this case, we consider two points from $P$ and an assignment of two
weights from $\wone$ to the points.
We compute the weighted center and decide whether the two weighted points
determine the weighted center and the covering radius using
Lemma~\ref{lem:validity}.

We need to sort the points in $P$ repeatedly for each weighted center
determined by a combination of
three points (or two points) from $P$ and three weights (or two weights) from $\wone$
while it suffices to sort
the weights in $W$ just once.  Thus, the total running time
for finding the weighted centers of all possible weight assignments is
$O(k^3n^4\log n)$.

Note that this algorithm returns the weighted centers of all possible weight 
assignments. 
Thus
it can be used for problem with optimization criteria 
other than the minimization of the covering radius.  For example, we can find the assignment
$f$ of weights such that the center $c(f)$ is the closest to a given
point in the same time.

\section{Algorithm for Computing an Optimal Assignment of Weights} 
\label{sec:algo1} 
In this section, we present an $O(k^2n^2 \log^3 n)$-time algorithm for
finding an assignment $f$ of weights that minimizes $r(f)$. This
algorithm does not consider all the weighted centers of
  all possible weight assignments. 
Instead, it uses parametric
search due to Megiddo~\cite{parametric}.
To apply this technique, we need to devise a decision algorithm which
is used as a subprocedure of the main algorithm.  

\subsection{Decision Algorithm}
\label{sec:algo1_decision}
Let $r$ be an input of the decision algorithm. The decision algorithm decides
whether there is an assignment $f$ of weights with $r(f) \leq r$.  In
other words, it decides whether there are a point $c$ and an assignment
$f$ of weights such that $d(p,c)/f (p) \leq r$ for all points $p \in
P$.  If this is the case, we call such a point $c$ an 
\emph{$r$-center} 
with respect to  $f$.


\begin{lemma}
  \label{lem:decision}
  For an assignment $f$ of weights with $r(f) \leq r$,
  there is 
  an $r$-center $c$ with respect to $f$ satisfying 
  $d(q,c)/f(q)=r$ for some point $q \in P$.
\end{lemma}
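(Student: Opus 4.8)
The plan is to describe the set of $r$-centers for a fixed assignment $f$ as a finite intersection of disks and then argue that this intersection, being nonempty and compact, must touch the boundary circle of at least one of the disks. First I would set up notation: for each point $p\in P$ let $D_p$ be the closed disk centered at $p$ with radius $f(p)\,r$. By definition a point $c$ is an $r$-center with respect to $f$ exactly when $d(p,c)\le f(p)\,r$ for every $p\in P$, that is, when $c\in K:=\bigcap_{p\in P}D_p$. Since $r(f)\le r$, the weighted center $c(f)$ satisfies $d(p,c(f))/f(p)\le r$ for all $p$, so $c(f)\in K$ and $K\neq\emptyset$. Each $D_p$ is closed and bounded (its radius $f(p)\,r$ is finite because $f(p)>0$ and $r>0$), hence $K$ is compact. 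Observe that the sought condition $d(q,c)/f(q)=r$ says precisely that $c$ lies on the bounding circle $\partial D_q$, so it suffices to prove that $K\cap\partial D_q\neq\emptyset$ for some $q\in P$.

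The main step is to rule out the opposite situation by contradiction. Suppose $K$ is disjoint from $\partial D_p$ for every $p\in P$; then every point of $K$ lies in the interior of every $D_p$. Fixing any $c\in K$ and using that $P$ is finite, the quantity $\delta:=\min_{p\in P}\bigl(f(p)\,r-d(p,c)\bigr)$ is strictly positive, and the open ball of radius $\delta$ about $c$ lies in every $D_p$, hence in $K$. Thus $K$ would be open; but a nonempty, bounded, simultaneously closed and open subset of the connected space $\mathbb{R}^2$ cannot exist, a contradiction. Equivalently, I could give a constructive version: shoot an arbitrary ray from $c(f)$; since $K$ is bounded the ray eventually leaves $K$, and the farthest point of the ray still lying in $K$ belongs to the boundary $\partial K\subseteq\bigcup_{p\in P}\partial D_p$ while remaining in $K$ because $K$ is closed. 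Either way we obtain a point $c\in K$ lying on some $\partial D_q$; this $c$ is an $r$-center with respect to $f$ and satisfies $d(q,c)/f(q)=r$, which is what we want.

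I do not expect a genuine obstacle here: the argument is short and elementary. The only points that need a little care are (i) checking that $K$ is actually bounded, so that the connectedness/clopen argument (or the "the ray leaves $K$" argument) applies — this is immediate from $f(p),r>0$ — and (ii) making sure degenerate shapes of $K$, such as a single point or a line segment, are covered; they are, since such a $K$ has empty interior and therefore cannot be open, so the contradiction in the main step still goes through.
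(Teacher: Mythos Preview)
Your proposal is correct, and your ray-shooting alternative is essentially the paper's own proof: the paper starts from an $r$-center, slides it along the horizontal line, and uses continuity of $\max_{p\in P} d(p,\cdot)/f(p)$ to locate a point $c'$ where the maximum equals $r$, which is exactly your ``last point of the ray inside $K$'' argument specialized to a horizontal ray. Your clopen formulation is a tidy topological repackaging of the same idea rather than a genuinely different route.
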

\begin{proof}
  %
  Assume that an assignment $f$ of weights satisfies
  $r(f) =\max_{p\in P} d(p,c(f))/f (p) \leq r$.
  Let $c$ be an $r$-center with respect to $f$. 
  Imagine that we move $c$ along the horizontal line through $c$.
  Since the distance function $d$ is continuous, there always exists
  a point $c'$ on the line such that $\max_{p\in P} d(p,c')/f (p) = r$.
  Clearly, $c'$ is an $r$-center with respect to $f$
  satisfying $d(q,c')/f(q)=r$ for some point $q \in P$.
\end{proof}

By the above lemma, there is a point $p\in P$ and a weight 
  $w \in W\cup\{1\}$
such that the circle centered at $p$ with radius $wr$ contains an $r$-center $c$ 
with respect to an assignment $f$ of weights if $r(f)\leq r$. 
To find such an $r$-center, we do as follows. For
each pair $(p,w)$ of a point $p \in P$ and a weight
$w \in W\cup\{1\}$, we consider the circle $C$ centered at $p$ with
radius $wr$.  
To check whether there exists an $r$-center on $C$, we 
subdivide $C$ into $O(kn)$ pieces such that for each piece $\mu$ on $C$,
every point on $\mu$ is an $r$-center
if and only if a point on $\mu$ is an $r$-center. 
Then we check for each piece if it contains an $r$-center.

\begin{figure}
  \begin{center}
    \includegraphics[width=0.5\textwidth]{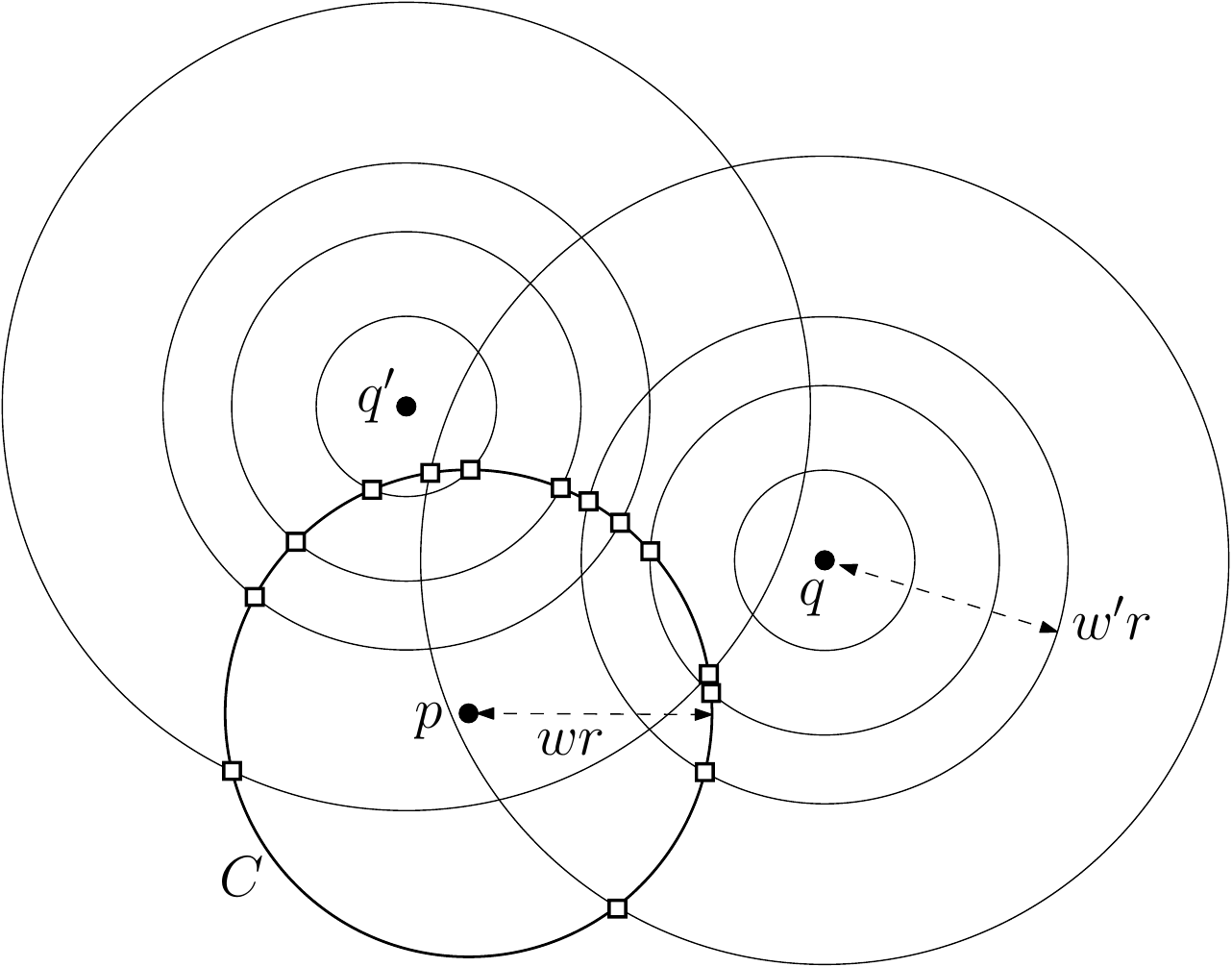}
    \caption {We compute all intersection points of $C$ with other circles,
        each centered at a point $q$ of $P\setminus\{p\}$ with radius $w'r$ for
      a weight $w'$ of $\wnw$. We sort the intersections points along $C$.
    }
    \label{fig:circle_intersection}
  \end{center}
\end{figure}

\subsubsection{Computing the Intervals on the Circles}
Remind that $\wone$ denotes the multiset consisting of the weights of $W$ and $n-k$
  numbers of weight $1$. 
Let $\wnw$ denote $\wone\setminus\{w\}$.
For each point $q \in P\setminus \{p\}$, we compute
concentric circles centered at $q$
with radius
  $w'r$ for each distinct weight $w'$ of $\wnw$. There are at most $k$
concentric circles centered at each point $q\in P\setminus\{p\}$.
We compute the intersections of these circles with $C$ and sort
them along $C$ in $O(kn \log (kn))=O(kn \log n)$ time.
See Figure~\ref{fig:circle_intersection} for an illustration.
In a degenerate case, there can be more than one circle passing through the
same intersection point. In this case, we treat the intersection
points as distinct points lying in the same position and handle them
separately in any order.
In the following, we assume that exactly one circle passes through
one intersection point.

Now, we have $O(kn)$ intersection points sorted along $C$.  These
intersection points subdivide $C$ into $O(kn)$ pieces,
which we call \emph{intervals} on $C$. 
We say an assignment $f$ of weights is \emph{feasible} for an interval
on $C$ if
for a point $c$ in the interval, $d(q,c)/f (q)$ is at most $r$ for
  all points $q \in P\setminus\{p\}$. 
Note that $f$ is feasible for any point in the
interval if $f$ is feasible for a point in the interval.
Therefore, the set of all feasible assignments of weights
is the same for any point $c$ lying in the same interval. 
We can test in $O(n\log n)$ time whether there exists a feasible assignment 
for an interval on $C$ by Lemma~\ref{lem:validity}. 
Specifically, we choose an arbitrary point $c$ in the interval, sort all points in $P\setminus\{p\}$ 
with respect to their distances from $c$, and assign each weight in $\wnw$ to a point in the sorted
list of the points such that a smaller weight is assigned to a point closer to $c$. Then we check if
$d(q,c)/f(q)\leq r$ for every point $q$ of $P\setminus \{p\}$ under the assignment $f$ of weights
we just found.

In this way, we can test the feasibility for every interval on $C$ in 
 $O(kn^2\log n)$ time in total. In the following, we show how to do this 
in $O(kn \log n)$ time in total for all intervals on $C$.

\subsubsection{Test of the Feasibility for Every Interval on \texorpdfstring{$C$}{C}}
Consider the intervals one by one in counterclockwise order along $C$.  Note
that for any two consecutive intervals, there is only one disk
centered at a point in $P\setminus\{p\}$
with radius $rw'$ for some $w' \in \wnw$
that
contains exactly one of the two intervals.

  We first show how to check the existence of a feasible assignment of weights
  for an interval $\mu$ on $C$ in $O(n\log n)$ time.  Then we show how we
  do this for all intervals on $C$ efficiently.
%
  We sort the weights of $W$ and $n-k$ numbers of weight $1$ 
    in the increasing order. The sorted list is denoted by
    $\langle w_1,\ldots,w_{n} \rangle$.
  For each point $q$ in $P$, let
  $\pi(q)$ be the smallest index such that $d(q,c)/w_{\pi(q)} \leq r$,
  where $c$ is a point in $\mu$.
  Thus, $\pi(p)$ is the smallest index of weight $w$ in the sorted list.
  The indices for all points of $P$ can be computed in $O(n\log n)$ time in total.
  Then we sort the points in $P$ in the increasing order with respect
  to $\pi(\cdot)$ in $O(n\log n)$ time and denote the sorted list by
  $\langle q_1,\ldots,q_n\rangle$. 
  Then there exists a feasible assignment of weights for $\mu$ if and only if
  $\pi(q_\ell) \leq \ell$ for all indices $1\leq \ell\leq n$.
  This is because for every point $q\in P$, we have $d(q,c)/w_j \leq r$
  for all indices $j \geq \pi(q)$.
  Since we can check if $\pi(q_\ell) \leq \ell$ for all indices $1\leq \ell\leq n$
  in $O(n)$ time,
  we can decide whether there is a feasible assignment of weights
  for $\mu$ in $O(n\log n)$ time. 

  To check the existence of a feasible assignment of weights for the interval
  $\mu'$ next to $\mu$ in counterclockwise order along $C$,
  we do not need to compute all such indices and compare them again.
  Recall that there is only one disk $D$ centered at a point $p' \in P\setminus\{p\}$
  with radius $rw'$ for some $w'\in \wnw$
  that contains exactly one of $\mu$ and $\mu'$. 
  We observe that $\pi(q)$ for every $q\in P\setminus\{p'\}$
    remains the same for $\mu$ and $\mu'$, because for any point $c'\in\mu'$
    we have $d(q,c')/w_{\pi(q)}\leq r$ and no smaller weight $\bar{w}\in \wnw$
    achieves $d(q,c')/\bar{w} \leq r$.
Now consider $\pi(p')$ for $\mu$. Let $w_\mu:= w_{\pi(p')}$ for $\mu$.
  If $D$ contains $\mu$ but does not contain $\mu'$,
  $D$ has radius $rw_\mu=rw'$ and $d(p',c')>rw_\mu$
  for any point $c'\in\mu'$. Thus, for $\mu'$, $\pi(p')$
  increases to the smallest index such that $d(p',c)/w_{\pi(p')}\leq r$.
  More specific, $w_{\pi(p')}$ for $\mu'$ is the smallest weight larger than
  $w_\mu$ among weights in $\wnw$.
  If $D$ contains $\mu'$ but does not
  contain $\mu$, $\pi(p')$ decreases to the smallest index such that
  $d(p',c)/w_{\pi(p')}\leq r$ for $\mu'$. Similarly, $w_{\pi(p')}=w'$ for $\mu'$
  is the largest weight smaller than
  $w_\mu$ among the weights in $\wnw$.

  To use this property, we maintain $2t$ pointers
  $U_1,\ldots,U_{t}$ and $L_1,\ldots,L_{t}$, where $t$ with $t\leq k$
  is the number of distinct weights in $\wnw$. For an index $i$ with
  $1\leq i\leq t$, the pointer $U_i$ points to $q_\ell$ with the
  largest index $\ell$ such that $\pi(q_\ell)$ is
  equal to the $i$th largest weight in $\wnw$.  
  Similarly, for an index $i$ with $1\leq i\leq t$, the
  pointer $L_i$ points to $q_\ell$ with the smallest index $\ell$
  such that $\pi(q_\ell)$ is equal to the $i$th largest weight in $\wnw$.
    
  Note that we already know whether $\pi(p')$ increases or decreases
  when we move from $\mu$ to $\mu'$.  Here, we have to update not
  only $\pi(p')$ but also the pointers.  Moreover, we have to reorder
  $\langle q_1,\ldots,q_{n} \rangle$ in the increasing order with
  respect to $\pi(\cdot)$.
  
  We show how to do this for the case that $\pi(p')$ increases.
  The other case can be handled analogously.  We first find the point
  $p'_u$ that the pointer $U_{\pi(p')}$ points to.  Then we swap the
  positions for $p'$ and $p'_u$ on the sequence
  $\langle q_1,\ldots,q_{n} \rangle$ and let $U_{\pi(p')}$ point to
  $p'$.  This does not violate the property that
  $\pi(q_\ell)\leq \pi(q_{\ell+1})$ for all indices $1\leq \ell < n$
  since $\pi(p')=\pi(p'_u)$.  Then we increase $\pi(p')$ and
  update the pointers accordingly.	
  To check the existence of a feasible assignment of weights for $\mu'$, it suffices to
  check $\pi(q_\ell)\leq \ell$ for $q_\ell =p', p'_u$ only
  because there is no change to $\pi(q)$ for the other
  points $q$. This can be done in constant time, and thus the update for
  the next interval can be done in constant time.  Since there are $O(kn)$ intervals on $C$,
  the update and existence test can be done in $O(kn)$ time in total for the intervals.
  Therefore, the running time of the procedure is dominated by the time for sorting the
  intersection points along $C$ and computing the intervals, which is
  $O(kn \log n)$.

\begin{lemma}
	\label{lem:interval}
	We can decide whether or not there is a feasible assignment of weights 
	for every interval on $C$ in $O(kn\log n)$ time in total.
\end{lemma}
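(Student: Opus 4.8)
The plan is to keep the $O(kn\log n)$ cost entirely inside a single preprocessing step — computing and sorting along $C$ the $O(kn)$ points where the concentric circles about the points of $P\setminus\{p\}$ meet $C$ — and then to show that, once this sorted list is in hand, the feasibility of each successive interval can be decided in amortized $O(1)$ time by incrementally maintaining the data structure underlying Lemma~\ref{lem:validity}.

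First I would fix the pair $(p,w)$, form for each $q\in P\setminus\{p\}$ the at most $k$ concentric circles of radii $w'r$ over the distinct weights $w'\in\wnw$, intersect them with $C$, and sort the resulting $O(kn)$ intersection points around $C$; this costs $O(kn\log(kn))=O(kn\log n)$ and partitions $C$ into $O(kn)$ intervals. For the first interval I would pick an interior point $c$, compute for every $q\in P$ the index $\pi(q)$ equal to the smallest index with $d(q,c)/w_{\pi(q)}\le r$ in the sorted weight sequence $\langle w_1,\dots,w_n\rangle$, sort $P$ by $\pi(\cdot)$ into $\langle q_1,\dots,q_n\rangle$, and record that a feasible assignment exists for this interval exactly when $\pi(q_\ell)\le\ell$ for all $\ell$ — which is the reformulation of Lemma~\ref{lem:validity} obtained by the smaller-weight-to-closer-point exchange argument. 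I would also initialize, for each of the $t\le k$ distinct weights of $\wnw$, the boundary pointers $U_i$ and $L_i$ into $\langle q_1,\dots,q_n\rangle$. All of this is a one-time $O(n\log n)$ cost.

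The hard part — and the crux of the proof — will be the incremental update when passing from an interval $\mu$ to the next interval $\mu'$ in counterclockwise order: I need that exactly one disk about some $p'\in P\setminus\{p\}$ distinguishes $\mu$ from $\mu'$, so that only $\pi(p')$ can change, and that it changes by a single step, namely to the next larger distinct weight of $\wnw$ if that disk loses $\mu'$ and to the next smaller one if it gains $\mu'$ — all other $\pi(q)$ being unaffected because the corresponding disk memberships are unchanged and no smaller admissible weight appears. Granting this, the update is $O(1)$: swap $p'$ with the point currently pointed to by $U_{\pi(p')}$ (or $L_{\pi(p')}$) — legitimate since the two share the same $\pi$-value, so the swap preserves the sortedness of $\langle q_1,\dots,q_n\rangle$ by $\pi(\cdot)$ — then adjust that single $\pi$-value and the $O(1)$ affected pointers, and finally recheck the condition $\pi(q_\ell)\le\ell$ only at the two swapped positions, since no other entry moved. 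Combining the $O(n\log n)$ setup with $O(1)$ work over each of the $O(kn)$ intervals yields the claimed $O(kn\log n)$ total running time.
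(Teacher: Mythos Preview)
Your proposal is correct and follows essentially the same approach as the paper: the paper likewise sorts the $O(kn)$ intersection points along $C$, sets up the indices $\pi(q)$ and the sorted sequence $\langle q_1,\dots,q_n\rangle$ together with the $2t$ boundary pointers $U_i,L_i$ for the first interval in $O(n\log n)$ time, and then, exploiting that exactly one disk distinguishes consecutive intervals, updates $\pi(p')$ by one step via the swap with the point at $U_{\pi(p')}$ (or $L_{\pi(p')}$) and rechecks feasibility only at the affected positions in $O(1)$ time per interval. The argument and data structures match the paper's almost verbatim.
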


Since there are $O(kn)$ point-weight pairs,
we have the following lemma.
\begin{lemma}
  Given a real value $r>0$, we can decide in $O(k^2n^2\log n)$ time
  using $O(kn)$ space whether or not there exists an assignment $f$ of weights
  with $r(f)\leq r$.
\end{lemma}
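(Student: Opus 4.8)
The plan is to assemble the decision algorithm from the pieces already established and then count the total work. By Lemma~\ref{lem:decision}, if there is an assignment $f$ with $r(f)\leq r$, then there is a witness $r$-center lying on the circle $C$ centered at some point $p\in P$ with radius $wr$ for some weight $w\in W\cup\{1\}$. So the algorithm iterates over all $O(kn)$ pairs $(p,w)$: for each pair it forms the circle $C$, computes the $O(kn)$ intersection points of $C$ with the concentric circles centered at the points of $P\setminus\{p\}$ with radii $w'r$ for $w'\in\wnw$, sorts these intersections along $C$ to obtain the $O(kn)$ intervals, and then applies Lemma~\ref{lem:interval} to test, for every interval, whether there is a feasible assignment of weights. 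The algorithm answers ``yes'' iff some interval on some circle $C$ admits a feasible assignment (remembering also to account for the weight $w$ already committed to $p$, i.e.\ the feasibility test is for the residual instance on $P\setminus\{p\}$ with weights $\wnw$, and any interval point $c$ on $C$ automatically satisfies $d(p,c)/w = r$).

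Correctness then follows directly: if the algorithm reports ``yes,'' concatenating the assignment $w\mapsto p$ with the feasible assignment for the chosen interval yields an $f$ with $d(q,c)/f(q)\leq r$ for all $q\in P$, hence $r(f)\leq r$ since $c$ witnesses a point whose maximum weighted distance is $\leq r$ and the weighted center can only do at least as well. Conversely, if such an $f$ exists, Lemma~\ref{lem:decision} supplies an $r$-center on one of the circles $C$, that center lies in one of the intervals we enumerate, and the feasibility test for that interval succeeds by Lemma~\ref{lem:validity}.

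For the running time: there are $O(kn)$ pairs $(p,w)$. For a fixed pair, computing and sorting the $O(kn)$ intersection points along $C$ takes $O(kn\log(kn))=O(kn\log n)$ time, and by Lemma~\ref{lem:interval} testing feasibility for all intervals on $C$ takes a further $O(kn\log n)$ time. Hence each pair costs $O(kn\log n)$, and the total is $O(kn)\cdot O(kn\log n)=O(k^2n^2\log n)$. For the space bound, observe that the pairs $(p,w)$ are processed one at a time and the data structures for one circle (the list of intersection points, the sorted sequence $\langle q_1,\dots,q_n\rangle$, and the $2t\leq 2k$ pointers) require only $O(kn)$ space, which can be reused across pairs; storing $P$ and $W$ takes $O(n)$ space. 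Thus the algorithm runs in $O(k^2n^2\log n)$ time using $O(kn)$ space.

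There is no serious obstacle remaining: the two substantive ingredients---that a witness center must lie on one of the $O(kn)$ circles (Lemma~\ref{lem:decision}) and that all $O(kn)$ interval tests on a single circle can be done in $O(kn\log n)$ total time via incrementally maintained pointers (Lemma~\ref{lem:interval})---are already in hand. The only point requiring a little care is the bookkeeping that the interval feasibility test is run on the residual instance $P\setminus\{p\}$ with the residual multiset $\wnw=\wone\setminus\{w\}$, so that the overall assignment pieces together consistently; this is exactly how ``feasible for an interval on $C$'' was defined above, so the composition is immediate.
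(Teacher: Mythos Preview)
Your proposal is correct and follows essentially the same approach as the paper: iterate over the $O(kn)$ point-weight pairs, for each pair build the circle $C$, compute and sort the $O(kn)$ intersection points to obtain the intervals, and invoke Lemma~\ref{lem:interval} to test all intervals in $O(kn\log n)$ time, yielding $O(k^2n^2\log n)$ total time and $O(kn)$ space by processing one pair at a time. The only cosmetic difference is that the paper keeps $p$ and $w$ in the sorted lists (with $\pi(p)$ set to the index of $w$) rather than passing to the residual instance $P\setminus\{p\}$, $\wnw$ as you do; both bookkeepings are equivalent.
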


\subsection{Overall Algorithm}
\label{sec:algo1_overall}
In this section, we present an algorithm for computing an optimal assignment of weights
in $O(k^2n^2\log^3 n)$ time using
$O(kn)$ space.  
To obtain an optimal solution, we apply parametric search~\cite{parametric} 
using the decision algorithm in
Section~\ref{sec:algo1_decision}.  Let $r^*$ be the minimum of $r(f)$
over all possible assignments $f$ of weights.
We use $A(r)$ to denote the arrangement of the circles $C_{p,w}(r)$ for all
point-weight pairs $(p,w)$, where $C_{p,w}(r)$ is the circle centered
at $p$ with radius $rw$.  Let $\mathcal{C}(r)$ be the set of the 
circles $C_{p,w}(r)$ for all point-weight pairs $(p,w)$.
Here, $r > 0$ is a variable.  
Notice that the combinatorial complexity of $A(r)$ is $\Theta(k^2n^2)$ in the worst case. Thus 
we cannot maintain the whole structure of $A(r)$ explicitly using $O(kn)$ space.

\subsubsection{Combinatorial Structure of the Arrangement}
Imagine that $r$ increases from $0$.
  Then the combinatorial structure of $A(r)$ changes at certain $r$ values.
To be specific, we observe that 
three circles 
  of $\mathcal{C}(r)$ 
  meet at a point or two circles of $\mathcal{C}(r)$
    become tangent to each other
  when the combinatorial structure of $A(r)$ changes.
 We call a value of $r$
where the combinatorial structure of $A(r)$ changes
an \emph{$r$-value}.
For any three point-weight pairs, 
there are at most two $r$-values for which the circles $C_{p,w}(r)$ of the three pairs $(p,w)$
have a common intersection because the trajectory of the intersections between
two increasing circles forms a hyperbolic curve and two hyperbolic curves cross
at most twice.  
Also, for any two point-weight pairs,
there are at most two $r$-values for which the circles $C_{p,w}(r)$
of the pairs $(p,w)$ are tangent to each other.
Thus, the combinatorial structure of $A(r)$ changes
$O(k^3n^3)$ times, and there are $O(k^3n^3)$ $r$-values.

Given a real value $r > 0$, we first introduce a simple way to compute
the arrangement $A(r)$. 
For a point $p \in P$ and a weight $w \in
W\cup\{1\}$, we compute the intersections of $C_{p,w}(r)$ and
$C_{p',w'}(r)$ for all points $p'$ in $P\setminus\{p\}$ and all
distinct weights $w'$ of $\wnw$.
Each circle $C_{p',w'}(r)$ intersects $C_{p,w}(r)$ at most twice for any fixed $r > 0$.
Let $\mathcal{I}_{p,w}(r)$ be the set of all such $O(kn)$ intersection points.
We sort the points in $\mathcal{I}_{p,w}(r)$
along $C_{p,w}(r)$ in counterclockwise order.
Once this is done for all points $p\in P$ and all distinct weights 
$w$ of $W\cup\{1\}$, we can construct $A(r)$.

As $r$ increases, the combinatorial structure of $A(r)$ changes
only when the sorted list of the points
in $\mathcal{I}_{p,w}(r)$ along $C_{p,w}(r)$ for a point-weight pair $(p,w)$
changes. Clearly, the sorted list changes if two points in $\mathcal{I}_{p,w}(r)$ cross
each other, a point appears on $\mathcal{I}_{p,w}(r)$, or a point  disappears
from in $\mathcal{I}_{p,w}(r)$. Note that an intersection point $x$ appears on $\mathcal{I}_{p,w}(r)$ or disappears from $\mathcal{I}_{p,w}(r)$ 
if $C_{p',w'}(r)$ becomes tangent to $C_{p,w}(r)$ for a point-weight pair $(p',w')$.

These $r$-values partition the real value space $\mathbb{R}$
into intervals such that 
for any value $r$ in the same interval
the combinatorial structure of $A(r)$ remains the same.
We search for the interval where $r^*$
lies using the decision algorithm in Section~\ref{sec:algo1_decision}.
There are $\Theta(k^3n^3)$ such $r$-values, 
but we do not consider all of them.  Instead, as we will see later, 
we do the search 
in $O(\log n)$ iterations.  
In each iteration, we consider
$O(k^2n^2)$ $r$-values and reduce the search space (intervals).  
In each iteration, we apply the decision algorithm $O(\log n)$ times,
which leads to the running time of $O(k^2n^2\log^3 n)$.  

\subsubsection{Computing the Combinatorial Structure of \texorpdfstring{$A(r^*)$}{A(r*)}}
  In this subsection, we compute the combinatorial structure of 
  $A(r^*)$ without computing $r^*$ explicitly.   
  We first compute the intersection points in $\mathcal{I}_{p,w}(r^*)$ for each point-weight
  pair $(p,w)$, and then sort them along $C_{p,w}(r^*)$.
  For both procedures, we need the following technical lemma.
 \begin{lemma}\label{lem:median-of-median}
  	Let $\mathcal{L}_i$ be a set of $M$ real numbers for an index $i$ with $1\leq i\leq N$.
	We can find the interval containing $r^*$ among the intervals on $\mathbb{R}$ induced by the numbers of 
	$\cup_{i=1}^N \mathcal{L}_i$
	in $O((k^2n^2\log n+NM)\log (N+M))$ time using $O(kn+M)$ space assuming that we can 
	obtain the numbers in $\mathcal{L}_i$ in $O(M)$ time.
\end{lemma}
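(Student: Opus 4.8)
The plan is to use the decision algorithm of Section~\ref{sec:algo1_decision} as an oracle --- on input $r$ it reports whether $r\ge r^*$, i.e.\ whether some assignment $f$ has $r(f)\le r$ --- and to run a binary-search-style pruning that shrinks, round by round, an interval known to contain $r^*$, until that interval contains no number of $\bigcup_i\mathcal{L}_i$ in its interior. Since we cannot afford to store the $NM$ numbers, I would never form $\bigcup_i\mathcal{L}_i$ explicitly: in every round the sets are regenerated from scratch (each in $O(M)$ time, by assumption), and the only persistent state is an interval $(\lambda^-,\lambda^+]$ containing $r^*$, maintained so that both endpoints are $\pm\infty$ or members of $\bigcup_i\mathcal{L}_i$. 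Call a number \emph{active} if it lies strictly between $\lambda^-$ and $\lambda^+$; initially $\lambda^-=-\infty$, $\lambda^+=+\infty$, and all $NM$ numbers are active.

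A single round proceeds as follows. For each $i$ we obtain $\mathcal{L}_i$, scan it to extract its active numbers $S_i$ and their count $c_i=|S_i|$, compute the median $m_i$ of $S_i$ by linear-time selection, and discard $\mathcal{L}_i$; this costs $O(M)$ time and $O(M)$ transient space per set, $O(NM)$ time overall, and leaves us holding the $N$ pairs $(m_i,c_i)$. Writing $W=\sum_i c_i$ for the current number of active numbers, we compute in $O(N)$ time a \emph{weighted} median $x$ of the $m_i$ with the $c_i$ as weights --- a value $x\in\{m_1,\dots,m_N\}$ with $\sum_{i:\,m_i<x}c_i\le W/2$ and $\sum_{i:\,m_i>x}c_i\le W/2$ --- and invoke the decision oracle once on $x$. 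If it reports $r^*\le x$ we set $\lambda^+:=x$, otherwise $\lambda^-:=x$; since $x\in\bigcup_i\mathcal{L}_i$ the endpoint invariant is preserved and $(\lambda^-,\lambda^+]$ still contains $r^*$.

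The crucial observation is that every round destroys a constant fraction of the active numbers. If the oracle says $r^*>x$, then all active numbers $\le x$ become inactive; for each $i$ with $m_i\le x$, at least half of the $c_i$ numbers of $S_i$ are $\le m_i\le x$, and by the weighted-median property these indices carry total weight at least $W/2$, so at least $W/4$ numbers disappear. The case $r^*\le x$ is symmetric (it kills the numbers $\ge m_i\ge x$); in both cases a standard tie-breaking convention makes the bound precise up to a $\pm O(1)$ term. Hence $W$ shrinks geometrically, and after $O(\log(NM))=O(\log(N+M))$ rounds no active number remains, at which point $(\lambda^-,\lambda^+]$ --- whose endpoints are two consecutive elements of $\bigcup_i\mathcal{L}_i$, or $\pm\infty$ --- is the desired interval and is reported. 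Each round spends $O(NM)$ on regenerating and scanning the sets, $O(N)$ on the weighted median, and $O(k^2n^2\log n)$ on the single oracle call, so the total time is $O\big((k^2n^2\log n+NM)\log(N+M)\big)$; the space is $O(kn)$ for the oracle, $O(M)$ for the set currently in hand, and $O(N)$ for the pairs $(m_i,c_i)$, which is $O(kn+M)$ since $N=O(kn)$ in every application of this lemma.

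I expect the main obstacle to be the pruning argument: recognizing that it must be a \emph{weighted} median of the per-set medians (a plain median of medians fails when the sets have very unequal numbers of active elements) and verifying the constant-factor shrinkage with the tie cases handled --- together with the bookkeeping that keeps the space at $O(kn+M)$ by regenerating the $\mathcal{L}_i$ each round while keeping the per-round cost at the affordable $O(NM)$.
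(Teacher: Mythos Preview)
Your argument is correct and is essentially the same as the paper's: both maintain a shrinking interval containing $r^*$, in each round compute per-set medians $m_i$ with counts $c_i$, take a \emph{weighted} median of the $m_i$, call the decision oracle on it, and use the standard median-of-medians analysis to conclude that a constant fraction of the active numbers is eliminated, giving $O(\log(NM))=O(\log(N+M))$ rounds. The only cosmetic differences are that the paper phrases the state as a per-set ``search space'' rather than a single global interval (these coincide) and calls the oracle on two consecutive medians per round rather than one; neither affects the bounds. Your explicit remark that $N=O(kn)$ in every application is needed to absorb the $O(N)$ space for the pairs $(m_i,c_i)$ into $O(kn+M)$; the paper relies on the same fact without stating it.
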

\begin{proof}
	If we are allowed to use $O(kn+NM)$ space, we can compute the intervals on $\mathbb{R}$ induced
	by the numbers in $\cup_{i=1}^N \mathcal{L}_i$ explicitly, and then apply binary search
	on the intervals using the decision algorithm in Section~\ref{sec:algo1_decision}.
	This takes $O((k^2n^2\log n+NM)\log (N+M))$ time.
	
	We can improve the space complexity to $O(kn+M)$ as follows. Basically, we apply binary search
	on the intervals on $\mathbb{R}$ induced by the numbers in each set $\mathcal{L}_i$.
	Initially, $\mathbb{R}$ itself is the search space for every set $\mathcal{L}_i$. 
	For each $i$ with $1\leq i\leq N$, the search space for $\mathcal{L}_i$  gets smaller in over iterations. 
	After $O(\log (NM))$ iterations, the search space for $\mathcal{L}_i$ is an interval
	of $\mathbb{R}$ induced by the real numbers in $\mathcal{L}_i$. 
	Then the intersection of all search spaces is the interval we want to find.
	
	In each iteration, we compute the median $a_i$ of the real numbers in $\mathcal{L}_i$ 
	in the current search space for each set $\mathcal{L}_i$ in $O(NM)$ time in total. 
	We also compute the number $b_i$ of the real numbers of $\mathcal{L}_i$ 
	in the current search space in the same time.
	We assign weight $b_i/b$ to $a_i$ for each set $\mathcal{L}_i$, where $b$ is the sum of $b_i$
	for all indices $i$ with $1\leq i\leq N$. Then 
	we compute the \emph{weighted median} of all medians $a_i$'s of $\mathcal{L}_i$'s. In specific,
	we compute $a_i$ with smallest index $i$ 
	such that the sum of $a_jb_j$ for all sets $\mathcal{L}_j$ with $a_i\geq a_j$
	is at most $1/2$.
	After computing $a_i$ and $b_i$ for every set $\mathcal{L}_i$, we can find the weighted median $a$ of them
	in $O(N)$ time. Let $a'$ be the smallest median larger than $a$
	among the medians for all $\mathcal{L}_i$.  We apply the decision algorithm to $a$ (and $a'$)
	to determine
	if $r^*\leq a$, $a < r^* \leq a'$, or $a' < r^*$.
	If $r^* \leq a$, we halve the search space for $\mathcal{L}_i$ such that $a_i \geq a$.
	If $a < r^* \leq a'$, we halve the search space for $\mathcal{L}_i$ such that $a_i=a$. 
	Otherwise, we halve the search space for $\mathcal{L}_i$ such that $a_i\leq a$.
	Each iteration takes $O(k^2n^2\log n + NM)$ time using $O(kn+M)$ space,
	and reduces the search space for some sets.
	
	The search space for $\mathcal{L}_i$ becomes an interval of $\mathbb{R}$ induced 
	by the real numbers in the set in $O(\log (N+M))$ iterations. This is simply because 
	the sum of $b_i$'s reduces by a constant factor for each iteration. 
	Therefore, we can compute the interval containing $r^*$ among the intervals 
	on $\mathbb{R}$ induced by the real numbers of $\cup_{i=1}^N \mathcal{L}_i$
	in $O((k^2n^2\log n+NM)\log (N+M))$ time using $O(kn+M)$ space.	
\end{proof}
  
  \paragraph{Computing the Intersection Points in $\mathcal{I}_{p,w}(r^*)$.}
  To compute the intersection points in $\mathcal{I}_{p,w}(r^*)$, we compute 
  every $r$-value for which a point appears on or disappears from $\mathcal{I}_{p,w}(r^*)$.
  Recall that a circle $C_{p',w'}(r)$ is tangent to $C_{p,w}(r)$ at such a $r$-value.
  There are $O(k^2n^2)$ such $r$-values for all point-weight pairs $(p,w)$, and we can compute them
  in $O(k^2n^2)$ time. By applying binary search on such $r$-values
  together with the decision algorithm in Section~\ref{sec:algo1_decision}, we find the interval containing $r^*$ among the intervals
  on $\mathbb{R}$ induced by such $r$-values. This takes $O(k^2n^2\log^2 n)$ time with $O(k^2n^2)$ space.
  
  But we can improve the space complexity to $O(kn)$ without increasing the time complexity
  using Lemma~\ref{lem:median-of-median}. Here, we have a set of $r$-values for each point-weight pair $(p,w)$. We can compute the $r$-values in each set in $O(kn)$ time. We want to compute
  an interval on $\mathbb{R}$ induced by all $r$-values for all point-weight pairs.
  We apply Lemma~\ref{lem:median-of-median} to the sets of $r$-values. Then we can 
  compute the interval containing $r^*$ among the intervals
  on $\mathbb{R}$ induced by such $r$-values in $O(k^2n^2\log^2 n)$ time with $O(kn)$ space.
  
  Since for any point-weight pair, no point appears on or disappears from $\mathcal{I}_{p,w}(r)$ in this interval, we can obtain the intersection points in $\mathcal{I}_{p,w}(r^*)$ by computing
  the intersection points in $\mathcal{I}_{p,w}(r)$ for any value
  $r$ in the interval.
  
  \paragraph{Sorting the Intersection Points in $\mathcal{I}_{p,w}(r^*)$ along $C_{p,w}(r^*)$ in Clockwise Order.}
  Since $C_{p,w}(r)$ is a circle for any value $r>0$, we use an arbitrary point
  in $\mathcal{I}_{p,w}(r^*)$,
  denoted by $p_0(r)$, as the reference point of the sorted list.
  That is, the points in $\mathcal{I}_{p,w}(r)$ are sorted along $C_{p,w}(r)$,
  say in counterclockwise order, from $p_0(r)$.
  To make the description clear, we first handle the $r$-values for which $u(r)$ crosses
  $p_0(r)$ for every intersection point $u(r)$ in $\mathcal{I}_{p,w}(r^*)$.
  We compute the $r$-values for each point-weight pair in $O(kn)$ time, and then apply
  binary search on the $r$-values for all point-weight pairs as we did before 
  in $O(k^2n^2\log^2 n)$ time with $O(kn)$ space. We obtain an interval on $\mathbb{R}$ such that
  no intersection point $u(r)$ crosses $p_0(r)$ in this interval for any point-weight pair.
  
  Suppose that we want to compute the relative positions of two
  points $u_1(r^*)$ and $u_2(r^*)$ in the sorted list of the points of
  $\mathcal{I}_{p,w}(r^*)$ along $C_{p,w}(r^*)$. 
  As $r$ increases,
  both $u_1(r)$ and $u_2(r)$ move along $C_{p,w}(r)$ and cross each other
  at most twice. 
  Moreover, they cross each other 
  when $C_{p,w}(r)$ and the two circles defining $u_1(r)$ and $u_2(r)$ meet at a point.
  This occurs at most twice. Let $r_1$ and $r_2$ be the two $r$-values with
  $r_1 \leq r_2$ at which $u_1(r)$ and $u_2(r)$ cross each other.
  Then we decide whether $r^* \leq r_1$, $r_1\leq r^*
  \leq r_2$, or $r_2 \leq r^*$ using the decision algorithm in
  Section~\ref{sec:algo1_decision}.  With this, we can decide the
  relative position of $u_1(r^*)$ and $u_2(r^*)$ with respect to $p_0(r^*)$
    in the sorted list of the points in $\mathcal{I}_{p,w}(r^*)$ without computing $r^*$
  explicitly.
   Thus, we can compare two points in
   $\mathcal{I}_{p,w}(r^*)$ in $O(k^2n^2\log n)$ time.  By applying this comparison
   $O(kn\log n)$ times, we can sort the points in $\mathcal{I}_{p,w}(r^*)$
  along $C_{p,w}(r^*)$ without knowing $r^*$ in
  $O(k^3n^3\log^2 n)$ time.
  Since there are $O(kn)$ point-weight pairs, the total
  running time is $O(k^4n^4\log^2 n)$.
	
  Here, we apply the decision algorithm in
  Section~\ref{sec:algo1_decision} twice for each comparison, once with $r=r_1$
  and once with $r=r_2$. We reduce the
  running time of the overall algorithm by reducing the number of executions of 
  the decision  algorithm.
  Suppose that we want to do $m$ comparisons which are independent to
  each other.  As we did in the previous procedure, we
  compute at most two $r$-values from each comparison where
  two points cross each other. 
  Then we have at most $2m$ $r$-values. We sort them and apply binary search to compute the smallest
  interval containing $r^*$.
  After applying the decision algorithm $O(\log m)$ times, we can
  complete the $m$ comparisons.  
	
  In our problem, we use Cole's parallel algorithm 
  to sort $m$ elements in $O(\log m)$ time
  using $O(m)$ processors~\cite{parallel-sorting}. Note that comparisons performed in different
  processors are independent to each other.
  In each iteration, we compare $O(kn)$ elements for each point-weight pair.
  To do this, we consider a set of $O(kn)$ $r$-values for each point-weight pair, and apply Lemma~\ref{lem:median-of-median} to the sets for all point-weight pairs.
  Then we can obtain an interval containing $r^*$ among the intervals on $\mathbb{R}$ induced by the $r$-values for each
  point-weight pair in $O(k^2n^2\log^2 n)$. 
  In other words, we can complete the $O(kn)$ comparisons for every point-weight pair. 
  After $O(\log n)$ iterations, we can obtain an interval containing $r^*$ among the intervals on $\mathbb{R}$ induced by all $r$-values.
  This takes $O(k^2n^2\log^3 n)$ time using $O(kn)$ space.
  The combinatorial structure of $A(r)$ remains the same in this interval. Thus 
  we obtain the combinatorial structure of $A(r^*)$ by computing the combinatorial structure of $A(r)$. 

\begin{lemma}
	\label{lem:parallel}
	The combinatorial structure of $A(r^*)$ can be computed in
	$O(k^2n^2\log^3 n)$ time using $O(kn)$ space.
\end{lemma}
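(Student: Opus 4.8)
The plan is to formalize the two-stage strategy outlined above, producing the combinatorial structure of $A(r^*)$ without ever computing $r^*$: first pin down the vertex set $\mathcal{I}_{p,w}(r^*)$ on each circle, and then determine the cyclic order of those vertices along each circle. In both stages the idea is the same --- identify the family of critical $r$-values at which the quantity we are computing can change, use the decision algorithm of Section~\ref{sec:algo1_decision} to locate the interval of $\mathbb{R}$ that contains $r^*$ and no such critical value, and then read off the answer by evaluating at any convenient $r$ in that interval. The batched search of Lemma~\ref{lem:median-of-median} is what keeps the space at $O(kn)$ even though the relevant families of critical values have size $\Theta(k^3n^3)$.

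For the first stage I would, for each point-weight pair $(p,w)$, enumerate the $O(kn)$ values of $r$ at which some circle $C_{p',w'}(r)$ becomes tangent to $C_{p,w}(r)$; these are precisely the $r$-values at which a point enters or leaves $\mathcal{I}_{p,w}(r)$, and each such set is generated in $O(kn)$ time. Feeding these $N=O(kn)$ sets of size $M=O(kn)$ into Lemma~\ref{lem:median-of-median} yields, in $O(k^2n^2\log^2 n)$ time and $O(kn)$ space, the interval containing $r^*$ induced by all these values. Within this interval the set $\mathcal{I}_{p,w}$ is combinatorially fixed for every pair, so $\mathcal{I}_{p,w}(r^*)$ is obtained by computing $\mathcal{I}_{p,w}(r)$ for any $r$ in the interval.

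For the second stage I would fix a reference point $p_0(r)$ on each $C_{p,w}(r)$, first use Lemma~\ref{lem:median-of-median} once more to shrink to an interval in which no intersection point crosses its reference point, and then sort each vertex list. A single comparison of two vertices $u_1(r^*),u_2(r^*)$ reduces to deciding where $r^*$ lies relative to the at most two $r$-values at which $u_1(r)$ and $u_2(r)$ coincide on $C_{p,w}(r)$, which one resolved call of the decision algorithm settles. To amortize the decision calls, I would run Cole's parallel sorting algorithm~\cite{parallel-sorting} with $O(kn)$ processors per circle so that the $O(kn)$ comparisons in one round are mutually independent; the $O(kn)$ critical values this produces per pair form the sets for Lemma~\ref{lem:median-of-median}, so an entire round across all $O(kn)$ pairs costs $O(k^2n^2\log^2 n)$ time and $O(kn)$ space. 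Cole's sort finishes in $O(\log n)$ rounds, giving $O(k^2n^2\log^3 n)$ total time and $O(kn)$ space; the combinatorial structure of $A(r)$ is constant on the final interval, so evaluating it there yields $A(r^*)$.

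The main obstacle is reconciling the $O(kn)$ space bound with the fact that both the arrangement $A(r^*)$ and the families of critical $r$-values have size polynomial in $kn$ far exceeding $kn$: we can neither materialize the arrangement nor sort all critical values directly. This forces the use of Lemma~\ref{lem:median-of-median} (a space-efficient batched binary search) in place of a plain sort-and-search, and in the sorting stage it forces the parallel-sorting framework so that the number of rounds --- and hence the number of invocations of the decision algorithm via Lemma~\ref{lem:median-of-median} --- is only $O(\log n)$.
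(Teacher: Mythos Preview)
Your proposal is correct and follows essentially the same approach as the paper's own proof: the same two-stage decomposition (first fix the vertex sets $\mathcal{I}_{p,w}(r^*)$ via the tangency $r$-values, then determine the cyclic order on each circle), the same use of Lemma~\ref{lem:median-of-median} to keep the space at $O(kn)$ while batching decision calls, the same reference-point preprocessing, and the same use of Cole's parallel sort to bound the number of rounds by $O(\log n)$. The time and space analysis you give matches the paper's.
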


\subsubsection{Finding an Optimal Assignment of Weights}
We have the combinatorial structure of $A(r^*)$ while $r^*$
is not known yet. Using the combinatorial structure, we show how to compute 
an optimal assignment of weights and its covering radius $r^*$ 
in $O(k^2n^2 \log^2n)$ time in this subsection.

  We say that three circles,
  each defined by a point-weight pair,
  \emph{determine an $r$-value $r'$} if they intersect at one
  point for $r=r'$. 
  We already showed that there are at most
    three circles $C_1^*, C_2^*,$ and $C_3^*$,
    each defined by a point-weight pair for $r$-value $r^*$,
  that determine $r^*$.  Thus, in the combinatorial structure
    of $A(r^*)$, two intersection points on one of the three circles,
    say $C_1^*$, made by the other two are consecutive along $C_1^*$.

  Let $R$ be the set of all $r$-values determined by three
    circles in the combinatorial structure of $A(r^*)$
  such that 
    two intersection points on one of the three circles, denoted by
    $C$, made by the other two are consecutive along $C$.
  Since there are $O(k^2n^2)$ edges in the arrangement $A(r^*)$, there
  are the same number of such $r$-values.
	
  Imagine that we sort the $r$-values of $R$, apply binary search on the sorted list
  using the decision algorithm in Section~\ref{sec:algo1_decision}, and
  find the smallest $r$-value of $R$ to which the decision algorithm
  returns ``yes''.  Then the smallest $r$-value is $r^*$.
  This takes $O(k^2n^2\log^2 n)$ time as the $r$-values of $R$ can be sorted
  in $O(k^2n^2\log n)$ time and the decision algorithm is called $O(\log n)$
  times. 
  
  We do this without computing the whole arrangement $A(r^*)$.
  We consider a point-weight pair $(p,w)$ first. We compute the edges and the vertices lying
on $C_{p,w}(r^*)$.  This takes $O(kn\log n)$ time because we already
have the interval $\mu$ containing $r^*$ such that
	the combinatorial structure of $A(r)$
	remains the same for any $r$ in $\mu$. 
Then we apply the algorithm in Lemma~\ref{lem:find} on the edges and
vertices lying on $C_{p,w}(r^*)$.  The algorithm returns the minimum
	$r$-value in $\mu$ 
to which the decision algorithm returns ``yes''.
We do this for all point-weight pairs $(p,w)$.  Then we
have $O(kn)$ $r$-values one of which is exactly $r^*$.  We again apply
binary search on these $r$-values to find the minimum value over
such $r$-values to which the decision algorithm returns ``yes''.
Clearly, the minimum value is exactly $r^*$. This algorithm takes $O(k^2n^2\log^2 n)$ time using $O(kn)$ space.

\begin{lemma}
	\label{lem:find}
	Given the combinatorial structure of $A(r^*)$, an optimal assignment of weights
	and its covering radius $r^*$ can be computed in $O(k^2n^2\log^2
	n)$ time using $O(kn)$ space.
\end{lemma}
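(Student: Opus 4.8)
The plan is to pin down $r^*$ as the smallest element of the set $R$ (of $O(k^2n^2)$ $r$-values, one per edge of $A(r^*)$) to which the decision algorithm of Section~\ref{sec:algo1_decision} answers ``yes'', while never storing more than $O(kn)$ of these candidates at a time; the arrangement $A(r^*)$ itself has $\Theta(k^2n^2)$ complexity and so cannot be kept within the space budget. As recalled in the discussion preceding the statement, $r^*\in R$: for an optimal assignment $f^*$, the center $c(f^*)$ and the radius $r^*$ are determined by at most three weighted points $p_i$ with weights $w_i$, so the circles $C_{p_i,w_i}(r^*)$ all pass through $c(f^*)$, and two of their pairwise intersection points on one of those circles are consecutive along it; this exhibits $r^*$ as a member of $R$. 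Since the decision algorithm answers ``yes'' on input $r$ exactly when $r\ge r^*$, the value $r^*$ is precisely the smallest element of $R$ that it accepts.

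To respect the space bound, I group the candidates by point-weight pair: let $R_{p,w}\subseteq R$ collect the $r$-values of the edges of $A(r^*)$ lying on the circle $C_{p,w}(r^*)$, so $|R_{p,w}|=O(kn)$ and $R=\bigcup_{(p,w)} R_{p,w}$. Each $R_{p,w}$ can be regenerated from scratch in $O(kn\log n)$ time using $O(kn)$ space: by Lemma~\ref{lem:parallel} we already have an interval $\mu$ with $r^*\in\mu$ on which the combinatorial structure of $A(r)$ is constant, so for an arbitrary $r_0\in\mu$ we compute the $O(kn)$ points of $\mathcal{I}_{p,w}(r_0)$ and sort them along $C_{p,w}(r_0)$; the resulting edges on $C_{p,w}(r^*)$ are in one-to-one correspondence with $R_{p,w}$, and for each edge, bounded by the intersections of $C_{p,w}(r^*)$ with two further circles, we solve in $O(1)$ time for the at most two $r$-values at which those three circles become concurrent. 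As remarked earlier, each such root is the covering radius of at most three weighted points, hence is computed exactly by elementary means.

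Now I invoke the batched selection of Lemma~\ref{lem:median-of-median} with the $N=O(kn)$ sets $R_{p,w}$, each of size $M=O(kn)$: it returns, using $O(\log n)$ calls to the decision algorithm, the interval induced by $\bigcup_{(p,w)}R_{p,w}$ that contains $r^*$, in $O(k^2n^2\log^2 n)$ time and $O(kn)$ space. (Obtaining one set costs $O(M\log n)$ rather than $O(M)$, which only multiplies the $NM$ term of the stated bound by a $\log n$ factor; since $NM=O(k^2n^2)$ this term remains dominated by the $k^2n^2\log n$ term, so the bound is unaffected.) Because $r^*\in R$, it coincides with the right endpoint of the returned interval, so we read $r^*$ off directly; one final call of the (constructive) decision algorithm on input $r=r^*$ then returns an $r^*$-center together with the greedy feasible assignment $f^*$ for the interval containing it, and since $r(f^*)\le r^*=\min_f r(f)$, the assignment $f^*$ is optimal. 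All of this runs in $O(k^2n^2\log^2 n)$ time and $O(kn)$ space. The one genuinely nontrivial point is the space constraint: dropping it, one would simply build $A(r^*)$, list and sort the $O(k^2n^2)$ candidate $r$-values, and binary-search with the decision algorithm; the content is recognizing that the candidates split into $O(kn)$ independently and cheaply regenerable groups of size $O(kn)$, which is exactly the shape required for Lemma~\ref{lem:median-of-median} to operate within $O(kn)$ working space.
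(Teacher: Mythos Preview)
Your proposal is correct and follows essentially the same route as the paper: partition the candidate set $R$ into the $O(kn)$ groups $R_{p,w}$ of size $O(kn)$ each (one per circle of the arrangement), regenerate each group on demand from the interval $\mu$ supplied by Lemma~\ref{lem:parallel}, and locate $r^*$ among these candidates via the batched search of Lemma~\ref{lem:median-of-median}. Your write-up is in fact cleaner than the paper's at one point: the paper's text invokes ``the algorithm in Lemma~\ref{lem:find}'' on each $R_{p,w}$ separately---a self-reference that is evidently a typo---whereas you apply Lemma~\ref{lem:median-of-median} to all $O(kn)$ groups simultaneously, which is what is actually needed to keep the total number of decision calls at $O(\log n)$ rather than $O(kn\log n)$. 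Two minor remarks: your claim that $r^*$ is the \emph{right} endpoint of the returned interval tacitly uses that the decision algorithm tests $r\ge r^*$ (so the search collapses onto the largest ``no''/smallest ``yes'' boundary), and you should note that the two-determinator case (two circles tangent at $c(f^*)$) has already been absorbed as an endpoint of $\mu$ during the earlier tangency-event phase, so restricting $R$ to three-circle concurrences is justified.
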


\begin{theorem}
	Given a set $P$ of $n$ points in the plane and a multiset $W$ of $k$
	weights with $k \leq n$, an assignment $f$ of weights that minimizes $r(f)$ can
	be found in $O(k^2n^2 \log^3 n)$ time using $O(kn)$ space.
\end{theorem}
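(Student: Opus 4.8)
The plan is to assemble the theorem from the four lemmas proved in this section. First I would recall that, by parametric search (Megiddo~\cite{parametric}) applied to the decision procedure of Section~\ref{sec:algo1_decision}, the optimal covering radius $r^*$ is one of the critical values of a parameter $r>0$, and that the relevant combinatorial object to track is the arrangement $A(r)$ of the $\Theta(kn)$ circles $C_{p,w}(r)$. The key observation, already established, is that the optimal center $c(f^*)$ is a vertex of $A(r^*)$ determined by at most three of these circles, and hence $r^*$ is one of the $O(k^2n^2)$ $r$-values determined by triples of edges of $A(r^*)$ in which two of the three intersection points are consecutive along the third circle.

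Next I would invoke Lemma~\ref{lem:parallel} to compute the combinatorial structure of $A(r^*)$ in $O(k^2n^2\log^3 n)$ time using $O(kn)$ space, without ever computing $r^*$ explicitly; this is the parametric-search stage, where Cole's parallel sorting~\cite{parallel-sorting} and the space-efficient search of Lemma~\ref{lem:median-of-median} keep the space at $O(kn)$ while the decision algorithm (Lemma with the $O(k^2n^2\log n)$ bound) is called only $O(\log^2 n)$ times in total. Then I would apply Lemma~\ref{lem:find}: given this combinatorial structure, an optimal assignment $f$ and the value $r^*$ itself are extracted in $O(k^2n^2\log^2 n)$ time and $O(kn)$ space by enumerating the $O(k^2n^2)$ candidate $r$-values determined by consecutive-triple configurations and binary-searching over them with the decision algorithm. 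Adding the two running times, the bottleneck is the $O(k^2n^2\log^3 n)$ term from Lemma~\ref{lem:parallel}, and the space is $O(kn)$ throughout, which gives the claimed bounds.

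The main obstacle in this argument is entirely contained in Lemma~\ref{lem:parallel}: simulating the sorting of the intersection points $\mathcal{I}_{p,w}(r^*)$ along each circle $C_{p,w}(r^*)$ without knowing $r^*$. Here one must (i) argue that each comparison between two moving intersection points reduces to testing $r^*$ against the at most two $r$-values where those two points swap, which are roots of low-degree equations and hence exactly computable — this is precisely the point the introduction flags as the usual pain of parametric search, and it is benign here because those roots are covering radii of at most three weighted points; (ii) parallelize the comparisons via Cole's scheme so that only $O(\log n)$ rounds of the decision algorithm are needed; and (iii) fold in Lemma~\ref{lem:median-of-median} so that the $O(k^2n^2)$ candidate $r$-values per round need not be stored simultaneously. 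Once Lemma~\ref{lem:parallel} and Lemma~\ref{lem:find} are in hand, the theorem itself requires only the bookkeeping of adding time and space bounds and observing that the decision algorithm correctly certifies the minimum $r$-value answering ``yes'' as $r^*$.
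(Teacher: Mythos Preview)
Your proposal is correct and follows essentially the same approach as the paper: the theorem is assembled directly from Lemma~\ref{lem:parallel} (computing the combinatorial structure of $A(r^*)$ via parametric search with Cole's parallel sorting and Lemma~\ref{lem:median-of-median}) and Lemma~\ref{lem:find} (extracting $r^*$ and an optimal assignment from the $O(k^2n^2)$ consecutive-triple $r$-values), with the $O(k^2n^2\log^3 n)$ term from Lemma~\ref{lem:parallel} dominating. Your identification of where the real work lies and why the comparison roots are easy to compute matches the paper's discussion exactly.
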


\section{Faster Algorithm for a Small Set of Weights}
We can improve the algorithm in Section~\ref{sec:algo1} for the case
that $k$ is sufficiently small compared to $n$ and the input weights are
at most $1$. Specifically, we can compute an optimal assignment of weights
in $O(k^5n \log^3 k + kn\log^3 n)$ time using $O(kn)$ space. 
Let $f^*$ be an optimal assignment of weights, that is, $r(f^*)=r^*$. 
A point $p$ in $P$ is called a \emph{determinator} of $f^*$
if $d(p,c(f^*))/f^*(p)=r^*$.
We already
observed that there are two or three determinators for an optimal
assignment of weights. 
Moreover, if there exist exactly two determinators, then the two
determinators and $c(f^*)$ are collinear.

The algorithm in this section is based on the observation that if
$f^*(p)=1$ for a determinator $p$, then $d(p,c(f^*))\geq d(q,c(f^*))$ 
for any point $q$ in $P$. 
The following lemma provides a more general observation. 
\begin{lemma}
  \label{lem:determinator}
  Let $f^*$ be an optimal assignment of weights with minimum number
  of determinators.
  If a determinator $p$ is the $i$th closest point of $P$ from $c(f^*)$, 
  it is assigned the $i$th weight value in the sorted list of the weights
    of $\wone$ in increasing order, where $\wone$ is the multiset consisting of
    the weights of $W$ and $n-k$ numbers of weight $1$.
\end{lemma}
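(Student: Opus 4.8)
The plan is to argue by an exchange argument very similar in spirit to the proof of Lemma~\ref{lem:validity}, but now taking into account that the three (or two) determinators are pinned down by the structure of $f^*$. Let $c=c(f^*)$ and $r=r^*$, and let $\langle q_1,\ldots,q_n\rangle$ be the points of $P$ sorted in increasing order of Euclidean distance from $c$, and $\langle v_1,\ldots,v_n\rangle$ the weights of $\wone$ sorted in increasing order. First I would observe that for \emph{every} point $q_j$ we have $d(q_j,c)/f^*(q_j)\leq r$, and that a determinator $p=q_i$ moreover satisfies $d(q_i,c)/f^*(q_i)=r$ with equality. The goal is to show $f^*(q_i)=v_i$.

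The main step is a rearrangement claim: among all optimal assignments with $c(f)=c$ and $r(f)=r$, the one we are looking at (the one minimizing the number of determinators) must assign weights to points in the same sorted order, i.e.\ $d(q_j,c)\leq d(q_{j'},c)$ implies $f^*(q_j)\leq f^*(q_{j'})$. Suppose not; then there are indices $j<j'$ with $d(q_j,c)\leq d(q_{j'},c)$ but $f^*(q_j)>f^*(q_{j'})$. Swapping the two assigned weights keeps the assignment feasible: the constraint at $q_{j'}$ becomes $d(q_{j'},c)/f^*(q_j)$, which is at most $\max\{d(q_j,c)/f^*(q_j),\,d(q_{j'},c)/f^*(q_{j'})\}\leq r$ since $d(q_{j'},c)\le\max\{d(q_j,c),d(q_{j'},c)\}$... wait, here I need $d(q_{j'},c)\le d(q_j,c)$ which is false in general, so the correct bound is $d(q_{j'},c)/f^*(q_j) \le d(q_{j'},c)/f^*(q_{j'}) \le r$ because $f^*(q_j) > f^*(q_{j'})$; and $d(q_j,c)/f^*(q_{j'}) \le d(q_{j'},c)/f^*(q_{j'}) \le r$ because $d(q_j,c)\le d(q_{j'},c)$. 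So the swap is feasible, and it does not create a new point on the boundary circle except possibly at $q_j$ or $q_{j'}$; one checks that the number of determinators does not increase and in the strict-inequality case strictly decreases, contradicting minimality. Using the assumption that every weight in $W$ is at most $1$ guarantees that the sorted order of $\wone$ places all the copies of weight $1$ at the top, which keeps the bookkeeping of ties clean. Once the sorted-order property is established, it follows immediately that for each $j$, $f^*(q_j)$ is the $j$th smallest weight $v_j$; in particular if the determinator $p$ is the $i$th closest point, then $f^*(p)=v_i$, which is the assertion.

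The part I expect to require the most care is the tie handling together with the "minimum number of determinators" hypothesis: when $d(q_j,c)=d(q_{j'},c)$ or when several weights in $\wone$ coincide (all the $1$'s, or repeated values in $W$), the swap argument still goes through but one must verify that it never increases the determinator count, and must be careful that "the $i$th closest point" and "the $i$th weight value" are well-defined up to these ties — which they are, since ties make the choice irrelevant. I would also note the subtle point that the statement is about \emph{an} optimal assignment with the minimum number of determinators, so after performing all such swaps we land on an assignment that is still optimal, still has $c(f)=c$, $r(f)=r$, and has no more determinators than $f^*$; by minimality it has exactly the same number, and its determinators are forced to be the points realizing equality in the sorted constraint sequence, which pins down their weights as claimed.
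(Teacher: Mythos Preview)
Your exchange argument is the right idea and is exactly what the paper uses, but your ``main step'' is stated too strongly and is in fact false: the minimum-determinator optimal assignment $f^*$ need \emph{not} be globally sorted. If $q_j$ and $q_{j'}$ are both \emph{non}-determinators with $j<j'$ and $f^*(q_j)>f^*(q_{j'})$, your own computation shows the swap leaves both ratios strictly below $r$, so the determinator count stays the same and you get no contradiction. Concretely, take four points with two determinators $p_3,p_4$ (both weight $1$) and two interior non-determinators $p_1,p_2$; assigning $\{0.5,0.6\}$ to $(p_1,p_2)$ in either order gives the same center, radius, and determinator set. Your last-paragraph attempt to repair this by passing to a fully sorted $g$ does not close the gap either, because the lemma is a statement about $f^*$, and you have not argued that the swaps leave the determinators and their weights untouched.

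The fix is to restrict the exchange to pairs involving the determinator $p=q_i$, which is precisely the paper's route. For $q$ with $d(q,c)\ge d(p,c)$ one needs no swap: from $r=d(p,c)/f^*(p)\ge d(q,c)/f^*(q)$ one gets $f^*(q)\ge f^*(p)$ directly. For $q$ with $d(q,c)<d(p,c)$ and $f^*(q)>f^*(p)$, swapping the two weights makes both ratios strictly less than $r$, so $p$ ceases to be a determinator and the count strictly drops, contradicting minimality; hence $f^*(q)\le f^*(p)$. These two facts give exactly $i-1$ points with weight $\le f^*(p)$ and $n-i$ points with weight $\ge f^*(p)$, forcing $f^*(p)=v_i$. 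So your plan is essentially the paper's proof once you drop the global sortedness claim and argue only relative to the determinator.
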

\begin{proof}
  Consider a determinator $p$ that is the $i$th closest point of $P$
  from $c^*=c(f^*)$.  For any point $q \in P$ with
  $d(q,c^*) \geq d(p,c^*)$, we have $f^*(q) \geq f^*(p)$.  This is
  because $d(p,c^*)/f^*(p)=r^* \geq d(q,c^*)/f^*(q)$.
	
  Consider a point $q \in P\setminus\{p\}$ with $d(q,c^*) < d(p,c^*)$. If
  $f^*(q) > f^*(p)$, then we have $d(p, c^*)/f^*(q)< r^*$ and
  $d(q, c^*)/f^*(p)<r^*$. This implies that we can remove one
  determinator, $p$, of $f^*$ without increasing the covering radius
  by swapping the weights assigned to $p$ and $q$, which is a
  contradiction.  Thus, we have $f^*(q) \leq f^*(p)$.
	
  Therefore, there are $i-1$ points $q\in P\setminus\{p\}$ satisfying
  $f^*(q)\leq f^*(p)$ and $n-i$ points $q'\in P\setminus\{p\}$
  satisfying $f^*(q')\geq f^*(p)$.  This implies that
    $f^*(p)$ is the $i$th weight value in the sorted list of the weights of $\wone$
    in increasing order.
\end{proof}

We will see that the number of candidates for 
a determinator with its weight we consider can be reduced by Lemma~\ref{lem:determinator}.
Recall that the algorithm in Section~\ref{sec:algo1} considers
each of $O(kn)$ point-weight pairs as a determinator and its weight.
Also, we use the following corollary, which is a paraphrase of 
	Lemma~\ref{lem:validity}.
\begin{corollary}
  There exists an optimal assignment $f^*$ of weights that assigns the $i$th closest point
  of $P$ from $c(f^*)$ the $i$th weight value in the sorted list of the weights
  of $\wone$ in increasing order, where
  $\wone$ is the multiset consisting of the weights of $W$ and
  $n-k$ numbers of weight $1$.
\end{corollary}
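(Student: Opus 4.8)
The plan is to observe that this is essentially a restatement of Lemma~\ref{lem:validity} applied to an optimal assignment, combined with the exchange argument already used in the proof of Lemma~\ref{lem:determinator}. Let $f^*$ be any optimal assignment of weights and let $c^* = c(f^*)$, $r^* = r(f^*)$. Let $\langle q_1, \ldots, q_n \rangle$ be the points of $P$ sorted in increasing order of Euclidean distance from $c^*$, and let $\langle v_1, \ldots, v_n \rangle$ be the weights of $\wone$ sorted in increasing order. I claim that the assignment $g$ defined by $g(q_i) = v_i$ is also optimal, has $c(g) = c^*$, and has $r(g) = r^*$; this is exactly the assertion of the corollary.

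First I would verify that $g$ is a valid assignment: it assigns every weight of $\wone$ to a distinct point of $P$, hence in particular restricts to an assignment of $W$ to $k$ points with the remaining points getting weight $1$. Next, the key step is to show $d(q_i, c^*)/v_i \le r^*$ for all $i$. This follows from a direct exchange/majorization argument: since $f^*$ is feasible, $d(q_j, c^*)/f^*(q_j) \le r^*$ for all $j$; because the $q_j$ are sorted by distance from $c^*$, assigning the sorted weights in the sorted order can only decrease (or keep equal) every weighted distance, by the same rearrangement reasoning as in the proof of Lemma~\ref{lem:validity} — if some $d(q_i, c^*)/v_i > r^*$ for the smallest such $i$, then $v_i$ must be a weight that $f^*$ placed on some point $q_{i'}$ with $i' > i$ (since all weights $\le v_i$ in sorted order, when divided into $d(q_i,c^*)$, still exceed $r^*$), and then a counting argument forces some point $q_{i''}$ with $i'' > i$ to receive under $f^*$ a weight $\le v_i$, giving $d(q_i,c^*) \le d(q_{i''},c^*)$ and $d(q_{i''},c^*)/f^*(q_{i''}) \le r^*$, hence $d(q_i,c^*)/v_i \le r^*$, a contradiction. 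So $g$ is feasible with covering value at most $r^*$ when evaluated at the point $c^*$, i.e.\ $\max_i d(q_i,c^*)/g(q_i) \le r^*$.

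Finally I would argue $r(g) = r^*$ and $c(g) = c^*$. By the previous step the point $c^*$ witnesses $\max_{p \in P} d(p,c^*)/g(p) \le r^*$, so $r(g) \le r^*$; by optimality of $r^*$ we get $r(g) = r^*$, and since the weighted center is the unique minimizer of the maximum weighted distance, the point achieving the value $r^* = r(g)$ for the assignment $g$ is $c(g)$. Because $c^*$ achieves this value, $c(g) = c^*$. Therefore $g$ is an optimal assignment in which the $i$th closest point of $P$ from $c(g)$ receives the $i$th smallest weight of $\wone$, which is the claim. The main obstacle — really the only nontrivial point — is the rearrangement argument bounding $d(q_i,c^*)/v_i$, but this is verbatim the argument already carried out in the "other direction" of the proof of Lemma~\ref{lem:validity} (with $c^*, r^*$ in place of $c, r$ and with all $n$ points rather than $n-3$), so it can simply be invoked.
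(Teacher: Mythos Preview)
Your proposal is correct and follows essentially the same route the paper intends: the paper presents this corollary without proof, stating only that it ``is a paraphrase of Lemma~\ref{lem:validity},'' and your argument is exactly the natural fleshing-out of that remark---apply the rearrangement argument of Lemma~\ref{lem:validity} at the optimal center $c^*$ to produce the sorted assignment $g$, then use optimality of $r^*$ and uniqueness of the weighted center to conclude $c(g)=c^*$. One small wording slip: in your contradiction step you write ``$v_i$ must be a weight that $f^*$ placed on some point $q_{i'}$ with $i'>i$,'' but what the argument (and your own parenthetical justification) actually gives is that $f^*(q_i)=v_\ell$ for some $\ell>i$; since you explicitly defer to the proof of Lemma~\ref{lem:validity} for this step anyway, this is harmless.
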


We consider two possible cases of optimal weight assignments with respect to the determinators
and handle them in different ways. The first case is that every
determinator is assigned a weight smaller than 1. 
The second case is that a determinator is assigned weight $1$.


\subsection{Every Determinator is Assigned a Weight Smaller than 1.}  
\label{sec:every-determinator-smaller}
By Lemma~\ref{lem:determinator}, a determinator is one of the $k$ closest
points from $c(f^*)$.  This is related to the \emph{order-$k$
Voronoi diagram}.  The order-$k$ Voronoi diagram is a generalization of
the standard Voronoi diagram. Given sites in the plane, it partitions the plane into regions
such that every point in the same region has the same $k$ closest sites.
The complexity of the order-$k$ Voronoi diagram of $n$ point sites is
$O(kn)$~\cite{higherVDcomplexity}.  There are a number of algorithms
to compute the order-$k$ Voronoi diagram with different running
times~\cite{higherVD1, higherVD2}.  We compute the
order-$k$ Voronoi diagram of $P$ using the algorithm
in~\cite{higherVD2}, which runs in $O(n\log n+nk2^{\alpha\log^* k}) \leq
O(n\log n +kn\log k)$ time using $O(kn)$ space, where $\alpha$ is a constant. 
We also compute the farthest-point Voronoi diagram of $P$,
also known as the order-$(n-1)$
Voronoi diagram of $P$, in $O(n\log n)$ time.

In terms of the order-$k$ Voronoi diagram,
  Lemma~\ref{lem:determinator} can be interpreted as follows. The
  points of $P$ are the sites of the Voronoi diagram and the $k$ sites
  corresponding to the cell $V^*$ containing $c(f^*)$ are the closest
  $k$ sites from $c(f^*)$.  Note that the weights of $W$ are the
  smallest $k$ weights among the weights of $W$ and $n-k$ numbers of
  weight $1$.  By Lemma~\ref{lem:validity}, all the $k$ sites
  corresponding to $V^*$ must be assigned the weights of $W$ and all
  the other sites must be assigned weight $1$.  Since every
  determinator is assigned a weight smaller than $1$, it must be
  one of the $k$ sites corresponding to $V^*$.

  To use this observation, we consider each cell of the order-$k$ Voronoi diagram
  of $P$.  For each cell, we assign the $k$ weights of $W$
  to the $k$ sites corresponding to the cell, each weight to a distinct site
  by applying the algorithm in Section~\ref{sec:algo1} to the $k$ sites.
  This takes $O(k^4 \log^3 k)$ time. Let $c$ and $r$ be the weighted
  center and the covering radius of this assignment of weights with respect to the $k$ sites.
  We assign weight $1$ to all the other $n-k$ sites.
  Let $f$ be the resulting assignment of weights.

Then we check whether $c$ is $c(f)$ and $r$ is $r(f)$.
  By Lemma~\ref{lem:validity}, it suffices to check if the farthest point
  from $c$, which is assigned weight $1$, lies at distance at most $r$ from $c$.
This can be done in $O(\log n)$ time by finding the farthest point of $P$ from $c$
using the farthest-point Voronoi diagram of $P$.
In total, this takes $O(k^5n \log^3 k +kn \log n)$ time.

\subsection{A Determinator is Assigned Weight 1.}  
We first apply the procedure that deals with
the first case in Section~\ref{sec:every-determinator-smaller}. 
Let $r_U$ be the minimum radius of the results of
this procedure.  To handle the case that a determinator is assigned weight 1, 
we present a decision algorithm that returns ``yes'' for an input $r$ with $0 < r < r_U$ if and only if there
exist an assignment $f$ of weights and a point $c \in \mathbb{R}^2$
such that $d(p,c)/f (p) \leq r$ for all points $p \in P$ and $d(q,c) =
r$ for some point in $q \in P$.  
In other words, the decision algorithm returns ``yes'' if and only if
there is an assignment of weights with covering radius $r$ one of whose determinators is assigned
weight 1.
For a radius $r$, we call such a point $c$ a \emph{center} with radius $r$.

The following lemma enables us to apply parametric search.
\begin{lemma}
	\label{lem:monotone}
  If a determinator of an optimal  weight assignment is assigned weight $1$, then the
  decision problem for any input $r$ with $r^* \leq r < r_U$ returns 
  ``yes''.
\end{lemma}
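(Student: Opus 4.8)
The plan is to show that the "yes"-region of the decision problem in the range $[r^*, r_U)$ is exactly this whole half-open interval, by arguing that an optimal assignment $f^*$ whose determinator is assigned weight $1$ can itself be "stretched" to certify any radius $r$ with $r^* \le r < r_U$. First I would fix such an optimal assignment $f^*$, so $r(f^*) = r^* \le r$, and take a determinator $p$ of $f^*$ with $f^*(p) = 1$; by the definition of determinator, $d(p, c(f^*))/f^*(p) = d(p, c(f^*)) = r^*$. The key point is that because $p$ has weight $1$ and $p$ is a determinator, Lemma~\ref{lem:determinator} (or the observation right before it) tells us $p$ is a farthest point of $P$ from $c(f^*)$, i.e.\ $d(q, c(f^*)) \le r^*$ for every $q \in P$. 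So $c := c(f^*)$ already satisfies $d(q, c)/f^*(q) \le r^* \le r$ for all $q \in P$, and the only thing we need for a "yes" answer is a point at distance exactly $r$ from $c$ that is assigned weight $1$.

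Next I would produce such a witness by moving the center. Consider translating $c$ continuously along the ray from $c$ away from $p$ (equivalently, along the segment from $c$ through $p$ and beyond): as the center moves to a point $c'$, the distance $d(p, c')$ increases continuously from $r^*$, while $f^*(p) = 1$ stays fixed. By continuity there is a position where $d(p, c') = r$ (this uses $r \ge r^*$; if $r = r^*$ we simply take $c' = c$). I then need to check two things: that $p$ is still assigned weight $1$ under the (unchanged) assignment $f^*$ — trivially true — and that no other point's weighted distance exceeds $r$. For the latter, I would argue along the lines of Lemma~\ref{lem:decision}'s proof: for any $q \in P$, $d(q, c') \le d(q, c) + d(c, c') \le r^* + (r - r^*) = r$ if $q$ has weight $1$; for points with weight $w < 1$ a more careful estimate is needed, but since all such points lie within distance $r^* w$ of $c$ and we are only moving the center by $r - r^* < r_U - r^*$, the weighted distance $d(q,c')/w$ stays below $r$ provided $r < r_U$ — this is exactly where the hypothesis $r < r_U$ is used, to keep the perturbation small enough that the strictly-interior points stay feasible. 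Thus $c'$ is a center with radius $r$ realized by an assignment whose determinator $p$ is assigned weight $1$, so the decision algorithm returns "yes".

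The main obstacle I anticipate is making the second half rigorous: showing that the weighted distances of the non-determinator points (especially those assigned weights $w<1$) stay $\le r$ after the translation. Moving the center arbitrarily could push a lightly-weighted point's weighted distance above $r$; the bound $r < r_U$ must be leveraged to guarantee the slack. One clean way is to note that $r_U$ is itself a covering radius of some assignment (the best one found in Section~\ref{sec:every-determinator-smaller}), so for $r < r_U$ the configuration still has enough "room", and to choose the direction of translation (toward $p$) so that $p$'s constraint is the one that tightens while all others have strict slack at $c$ that is not entirely consumed. I would formalize this by a compactness/continuity argument: the function $r \mapsto (\text{min over centers } c' \text{ with } d(p,c')=r \text{ of } \max_{q\neq p} d(q,c')/f^*(q))$ is continuous, equals $r^* \le r$ at $r = r^*$ trivially and stays $\le r$ as long as $r < r_U$, which is precisely the claimed monotonicity. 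Alternatively, and perhaps more simply, one invokes the structural characterization already used in Lemma~\ref{lem:validity} and Lemma~\ref{lem:determinator}: since $p$ is a farthest point, placing $p$ on the boundary circle of radius $r$ and the remaining points sorted by distance against the sorted weights of $\wone$ yields a feasible assignment for every $r \in [r^*, r_U)$, with feasibility preserved because enlarging $r$ only relaxes each inequality $d(\cdot,c')/w \le r$.
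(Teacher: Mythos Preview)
Your approach is genuinely different from the paper's, and the gap you yourself flag is real and not repaired by the sketches you offer.

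The paper does \emph{not} construct a witness center for each $r$. It argues by contradiction: assume the decision returns ``no'' for some $r\in[r^*,r_U)$; since $r\ge r^*$, Lemma~\ref{lem:decision} produces an assignment $f$ and a center $c'$ with $\max_p d(p,c')/f(p)=r$, and ``no'' forces every tight point at $c'$ to have weight strictly less than~$1$. Such a configuration is exactly what the first-case procedure of Section~\ref{sec:every-determinator-smaller} is designed to detect, so it would have yielded $r_U\le r$, contradicting $r<r_U$. The hypothesis $r<r_U$ is thus used structurally, not as a metric bound on how far the center may move.

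Your direct construction, by contrast, tries to slide $c$ away from the weight-$1$ determinator~$p$ while keeping the same assignment $f^*$. For a point $q$ with $f^*(q)=w<1$ the triangle inequality only gives
\[
\frac{d(q,c')}{w}\;\le\;\frac{d(q,c^*)+(r-r^*)}{w}\;\le\;r^*+\frac{r-r^*}{w},
\]
which exceeds $r$ whenever $w<1$ and $r>r^*$. This is not a slack-in-the-estimate issue: take three collinear points with a tiny weight on the middle one (so $c^*$ is the middle point and $p$ is an endpoint); any displacement of the center by $r-r^*$ makes the middle point's weighted distance roughly $(r-r^*)/w\gg r$. So the construction with $f^*$ fixed simply fails. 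Your two proposed fixes do not close this: the ``continuity/compactness'' function you define is not shown to stay below $r$, and invoking Lemmas~\ref{lem:validity} and~\ref{lem:determinator} to re-sort weights at each $c'$ amounts to re-running the full decision problem rather than proving it returns ``yes''. In particular, nowhere in your argument is the definition of $r_U$ actually used---you only cite the inequality $r<r_U$ as a size bound on $r-r^*$, which is not what it provides.

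To align with the paper, abandon the moving-center construction and instead argue that a ``no'' answer forces the tight constraints to come from weight-$<1$ points, then invoke the definition of $r_U$ for the contradiction.
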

\begin{proof}
	Assume to the contrary that the decision problem for
	an input $r$ with $r^* \leq r < r_U$ returns ``no''. 
	%
	By Lemma~\ref{lem:decision}, there is an assignment $f$ of weights with covering radius $r$.
	However, since the answer for the decision problem is ``no'', every determinator of 
	$f$ is assigned a weight smaller than 1.
	Then such an assignment should have been dealt by the procedure for the first case
	in Section~\ref{sec:every-determinator-smaller}, which contradicts 
          the assumption that $r < r_U$. 
	Therefore, the lemma holds.
\end{proof}

\subparagraph{Decision Algorithm.}  Given a covering
radius $r$, the decision algorithm first computes the intersection $I$
of the disks $D(p,r)$ for all points $p \in P$, where $D(p,r)$ is the
disk centered at $p$ with radius $r$. If the answer for the decision
problem is ``yes'', 
there is a center corresponding to covering radius $r$ that
lies on the boundary of $I$ by the definition.

Thus the decision algorithm searches the boundary of $I$ and checks
whether there exists a center on the boundary of $I$. The boundary of $I$
consists of circular arcs. We call an endpoint of a maximal circular arc on the boundary of $I$
a \emph{breakpoint}. Here, we follow
the framework of the algorithm in Section~\ref{sec:algo1}.  That is,
we consider $O(kn)$ circles $C_{p,w}(r)$ for all points $p \in P$ and
all distinct weights $w$ of $W\setminus\{1\}$, where $C_{p,w}(r)$ is the
circle centered at $p$ with radius $rw$.  We compute
  $O(kn)$ intersection points of the circles with the boundary of $I$.
We sort them together with the breakpoints of $I$ 
along the boundary of $I$ in $O(kn\log n)$ time. Then we
apply the procedure in the proof of Lemma~\ref{lem:interval}, which
checks whether there exists a center corresponding to 
  covering radius $r$ lying on the boundary of $I$ in $O(kn)$ time.
The decision algorithm takes $O(kn\log n)$ time.

\subparagraph{Overall Algorithm.}
As we did in the decision algorithm, we first compute the intersection
$I(r^*)$ of the disks $D(p,r^*)$ for all points $p \in P$. Here, we
are not given $r^*$.  Instead of computing the intersection
explicitly, we compute its combinatorial structure.

\begin{lemma}
  The combinatorial structure of the intersection of the disks
  $D(p,r^*)$ for all points $p \in P$ can be computed in $O(n\log n+
  T(n)\log n)$ time, where $T(n)$ is the running time of the decision
  algorithm.
\end{lemma}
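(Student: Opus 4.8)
The plan is to reduce the problem to the farthest-point Voronoi diagram of $P$ (which is already computed in Section~\ref{sec:every-determinator-smaller}) together with a single monotone binary search driven by the decision algorithm. Write $I(r)=\bigcap_{p\in P}D(p,r)$. First I would record the elementary fact that only a disk centered at a convex-hull vertex of $P$ can contribute an arc to $\partial I(r)$: if $b\in P$ lies in the convex hull of the remaining points, then $d(b,c)\le\max_{p\ne b}d(p,c)$ for every $c$ by convexity of $d(\cdot,c)$ along a segment, so $D(b,r)\supseteq\bigcap_{p\ne b}D(p,r)$ for every $r$. Let $v_1,\dots,v_h$ be the hull vertices in cyclic order; these are exactly the sites whose farthest-point Voronoi cell is nonempty, and the cells occur in this same cyclic order. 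Put $g(c)=\max_i d(v_i,c)$ and $r_{\mathrm{sec}}=\min_c g(c)$; for every $r>r_{\mathrm{sec}}$ the region $I(r)$ has nonempty interior and $\partial I(r)=\{c:g(c)=r\}$. On the closed farthest-point cell $V_i$ of $v_i$ one has $g\equiv d(v_i,\cdot)$, so the part of $\partial I(r)$ inside $V_i$ is the circular arc $\{c\in V_i:d(v_i,c)=r\}$ of $\partial D(v_i,r)$; since $v_i\notin V_i$ and $V_i$ is convex and unbounded, this arc is a nondegenerate arc precisely when $r>\rho_i:=\min_{c\in V_i}d(v_i,c)$, and for varying $i$ these arcs tile $\partial I(r)$ in hull cyclic order. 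Consequently the combinatorial structure of $I(r)$ is determined by the set $\{\,i:\rho_i<r\,\}$, and contributing an arc is \emph{monotone} in $r$.

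Given this characterization, the algorithm is: compute the farthest-point Voronoi diagram of $P$ in $O(n\log n)$ time; for each hull vertex $v_i$ compute $\rho_i$, the distance from the point $v_i$ to the convex cell $V_i$, by inspecting the edges of $V_i$, in $O(n)$ time in total since the cell sizes sum to $O(n)$; sort the values $\rho_1,\dots,\rho_h$ in $O(n\log n)$ time; and binary search this sorted list for the threshold separating $\{\rho_i<r^*\}$ from $\{\rho_i\ge r^*\}$. Each comparison ``$\rho_i$ versus $r^*$'' is answered by one call of the decision algorithm at $r=\rho_i$: because $r^*$ is the minimum covering radius and, by Lemma~\ref{lem:monotone}, the decision algorithm returns ``yes'' for all $r$ with $r^*\le r<r_U$, it returns ``no'' at $\rho_i$ if and only if $\rho_i<r^*$ (the finitely many $\rho_i\ge r_U$ need not be queried, as then $\rho_i\ge r_U\ge r^*$, so $D(v_i,r^*)$ contributes no arc). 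This performs $O(\log n)$ calls of the decision algorithm, for $O(T(n)\log n)$ additional time, and once $\{\,i:\rho_i<r^*\,\}$ is known in cyclic order we have the entire combinatorial structure of $I(r^*)$, each breakpoint being the vertex of $\partial I(r^*)$ carried by the common farthest-point Voronoi edge of two consecutive contributing disks. The total running time is $O(n\log n+T(n)\log n)$.

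The main obstacle is establishing the structural statement of the first paragraph rigorously --- that each hull disk contributes at most one arc, that the arcs appear in hull cyclic order, and that ``$D(v_i,r)$ contributes'' is governed by the single threshold $\rho_i$ read off the farthest-point Voronoi diagram --- since it is exactly this monotonicity in one parameter that turns an a priori list of $h$ decision queries into a single $O(\log n)$-round binary search. The remaining care is with degeneracies: $r^*=r_{\mathrm{sec}}$, where $I(r^*)$ degenerates to a point; ties among the $\rho_i$; and a disk $D(v_i,r^*)$ whose circle is merely tangent to $V_i$ and hence contributes a single point rather than an arc. I would dispose of these by the same infinitesimal-perturbation convention used elsewhere in the paper.
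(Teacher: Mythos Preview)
Your approach is essentially the same as the paper's: compute the farthest-point Voronoi diagram, extract $O(n)$ critical radii at which the combinatorial structure of $I(r)$ changes, sort them, and binary-search with the decision algorithm to locate the interval containing $r^*$. The paper indexes the critical radii by the Voronoi \emph{vertices} (for each vertex $v$, the distance from $v$ to its farthest site), whereas you index them by the \emph{sites} (your $\rho_i=\min_{c\in V_i}d(v_i,c)$); in general position these two families coincide up to the two or three ties at $r_{\mathrm{sec}}$, since the closest point of $V_i$ to $v_i$ is a Voronoi vertex. Your write-up is somewhat more careful structurally: you make explicit that contributing an arc is monotone in $r$ (so arcs only appear and never disappear, which sharpens the paper's looser ``appear or disappear'' phrasing), that each hull disk contributes at most one arc, and that the arcs occur in hull cyclic order. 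One caution on that last cluster of claims: ``$v_i\notin V_i$ and $V_i$ convex'' alone does \emph{not} force $\partial D(v_i,r)\cap V_i$ to be a single arc (a circle centered outside a thin convex rectangle can meet it in two arcs); you need the extra structure that $V_i=\bigcap_{j\ne i}\{c:d(v_i,c)\ge d(v_j,c)\}$, so that on $\partial D(v_i,r)$ each constraint cuts out an arc of half-angle $<\pi/2$ centered in the direction of $v_j$, and since $v_i$ is a hull vertex these directions span less than $\pi$. With that refinement your argument goes through, and the time bound $O(n\log n+T(n)\log n)$ is the same as the paper's.
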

\begin{proof}
  As $r$ increases from $0$ to $r_U$, the combinatorial
  structure of the intersection of the disks $D(p,r)$ may change.  We
  have two types of events where the combinatorial structure changes: 
  an arc of a disk starts to appear in the structure 
  or an existing arc of a disk disappears from the structure.  
  At both types of events, such an arc becomes a point which is a degenerate
  arc. Moreover, this point is a vertex of the farthest-point Voronoi diagram of $P$.
	
  To use this fact, we compute the farthest-point Voronoi diagram of
  $P$ in $O(n\log n)$ time.  Then for each vertex $v$ of
  the diagram, we compute the Euclidean distance between $v$ and its
  farthest point in $P$.  There are $O(n)$ distances, and we sort them 
  in increasing order.	
  Then we apply binary search on the distances using the decision
  algorithm to find the smallest interval containing $r^*$ whose endpoints are the distances we have. 
  This can be done in $O(T(n)\log n)$ time.
	
  Therefore, for any value of $r$ in the interval, the combinatorial structure of
  the intersection of the disks remains the same. 
\end{proof}

Now we have the combinatorial structure of the intersection
$I(r^*)$. As we did in the algorithm of Section~\ref{sec:algo1}, 
we sort the intersections of $O(kn)$ circles $C_{p,w}(r^*)$ for 
all point $p \in P$ and all distinct weights
  $w$ of $W$ with the boundary of $I(r^*)$
without explicitly computing
$r^*$. This can be done in $O(kn\log n + T(n)\log^2 n)$ time, where $T(n)=O(kn\log n)$
is the running time of the decision algorithm, in a way similar to
Lemma~\ref{lem:parallel}.  Then we find an optimal solution in a way
similar to Lemma~\ref{lem:find} in $O(kn)$ time if it belongs to the second
case.  In total, the second case can be dealt in $O(kn\log^3 n)$ time using
$O(kn)$ space.

Combining the three cases, we have the following theorem.

\begin{theorem}
  Given a set $P$ of $n$ points in the plane and a multiset $W$ of $k$
  weights smaller than or equal to $1$ with $k \leq n$, we can compute an assignment 
  $f$ of weights  that minimizes
  $r(f)$ in $O(k^5n \log^3 k + kn\log^3 n)$ time using
  $O(kn)$ space.
\end{theorem}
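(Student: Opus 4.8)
The plan is to assemble the result from the three cases that have already been analyzed, so the main work is bookkeeping the running times and confirming that the case analysis is exhaustive. First I would observe that by Lemma~\ref{lem:determinator} and its corollary, there is always an optimal assignment $f^*$ in which the $i$th closest point of $P$ from $c(f^*)$ receives the $i$th smallest weight of $\wone$; hence $f^*$ is determined once we know which cell of a suitable order-$k$ Voronoi diagram contains $c(f^*)$. Since each determinator of $f^*$ is assigned a weight that is either one of the $k$ input weights (all $\le 1$) or the default weight $1$, exactly one of two situations occurs: either every determinator gets a weight strictly smaller than $1$, or at least one determinator gets weight $1$. These are the two cases handled in Sections~\ref{sec:every-determinator-smaller} and the subsequent subsection, so the case split is complete.

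Next I would collect the cost of the first case. Section~\ref{sec:every-determinator-smaller} computes the order-$k$ Voronoi diagram of $P$ in $O(n\log n + kn\log k)$ time and the farthest-point Voronoi diagram in $O(n\log n)$ time, both within $O(kn)$ space. For each of the $O(kn)$ cells of the order-$k$ diagram it runs the algorithm of Section~\ref{sec:algo1} on the $k$ sites of that cell, which by the first theorem costs $O(k^2\cdot k^2\cdot\log^3 k)=O(k^4\log^3 k)$ time, and then verifies the candidate center against the farthest weight-$1$ point in $O(\log n)$ time using the farthest-point Voronoi diagram. Summed over all cells this is $O(k^5 n\log^3 k + kn\log n)$, which fits in the claimed bound and uses $O(kn)$ space.

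For the second case I would quote the decision algorithm that, given $r<r_U$, intersects the disks $D(p,r)$, overlays the $O(kn)$ circles $C_{p,w}(r)$, and applies the interval-feasibility test of Lemma~\ref{lem:interval}; this runs in $T(n)=O(kn\log n)$ time. Parametric search is justified by the monotonicity statement of Lemma~\ref{lem:monotone}. Exactly as in Lemmas~\ref{lem:median-of-median}, \ref{lem:parallel}, and \ref{lem:find}, one first pins down the combinatorial structure of $I(r^*)$ using binary search over the $O(n)$ farthest-point Voronoi vertex distances in $O(n\log n + T(n)\log n)$ time, then sorts the intersections of the $O(kn)$ circles with $\partial I(r^*)$ using Cole's parallel sorting in $O(\log n)$ rounds of the decision algorithm, costing $O(kn\log n + T(n)\log^2 n)=O(kn\log^3 n)$ time, and finally extracts the optimal radius in $O(kn)$ additional time, all within $O(kn)$ space. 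Taking the better of $r_U$ (from the first case) and the optimum found in the second case gives $r^*$.

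The last step is simply to add the bounds: $O(k^5 n\log^3 k + kn\log n)$ from the first case and $O(kn\log^3 n)$ from the second dominate the $O(n\log n + kn\log k)$ preprocessing, yielding the stated $O(k^5 n\log^3 k + kn\log^3 n)$ time with $O(kn)$ space. The step I expect to need the most care is verifying the second case's parametric-search pipeline — specifically, checking that the combinatorial changes of $\partial I(r)$ together with the circle-overlay events are all captured by $r$-values one can enumerate or binary-search over, so that the simulation of the decision algorithm on the unknown $r^*$ is valid; the rest is routine accounting already licensed by the earlier lemmas.
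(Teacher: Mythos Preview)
Your proposal is correct and mirrors the paper's own argument: the theorem is obtained by combining the cost of the first case in Section~\ref{sec:every-determinator-smaller} ($O(k^5n\log^3 k + kn\log n)$) with the parametric-search treatment of the second case ($O(kn\log^3 n)$), exactly as you outline. The only cosmetic difference is that the paper presents the theorem as an immediate corollary of the preceding subsections without restating the bookkeeping, whereas you spell out the case split and the summation explicitly.
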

\section{Concluding Remarks}
We would like to mention that the approaches presented in this paper also work
under any convex distance function, including the $L_p$ metric for $p\geq 1$. 
For the $L_1$ or the $L_\infty$ metric, there can be more than one
optimal weighted center though. 
The weight assignment problem can also be considered in higher dimensions.
A future work to consider is to devise efficient algorithms for the problem 
in higher dimensions under various distance functions.

\bibliography{paper}{}

\begin{thebibliography}{10}

\bibitem{higherVD1}
Pankaj~K. Agarwal, Mark de~Berg, Ji\v{r}\'{i} Matous\v{e}k, and Otfried
  Schwarzkopf.
\newblock Constructing levels in arrangements and higher order voronoi
  diagrams.
\newblock {\em SIAM Journal on Computing}, 27(3):654--667, 1998.

\bibitem{Agarwal1995}
Pankaj~K. Agarwal and Micha Sharir.
\newblock {\em Computer Science Today: Recent Trends and Developments}, chapter
  Algorithmic techniques for geometric optimization, pages 234--253.
\newblock Springer Berlin Heidelberg, Berlin, Heidelberg, 1995.

\bibitem{inverse_tree}
Behrooz Alizadeh and Rainer~E. Burkard.
\newblock The inverse 1-center location problem on a tree.
\newblock Technical Report 2009-03, Graz University of Technology, 2009.

\bibitem{weight_balancing}
Luis Barba, Otfried Cheong, Jean-Lou De~Carufel, Michael~Gene Dobbins, Rudolf
  Fleischer, Akitoshi Kawamura, Matias Korman, Yoshio Okamoto, J\'{a}nos Pach,
  Yuan Tang, Takeshi Tokuyama, Sander Verdonschot, and Tianhao Wang.
\newblock Weight balancing on boundaries and skeletons.
\newblock In {\em Proceedings of the Thirtieth Annual Symposium on
  Computational Geometry (SoCG 2014)}, pages 436--443, 2014.

\bibitem{reverse_NPhard}
Oded Berman, Divinagracia~I. Ingco, and Amedeo Odoni.
\newblock Improving the location of minimax facilities through network
  modification.
\newblock {\em Networks}, 24(1):31--41, 1994.

\bibitem{inverse_NPhard}
Mao-Cheong Cai, X.~G. Yang, and J.~Z. Zhang.
\newblock The complexity analysis of the inverse center location problem.
\newblock {\em Journal of Global Optimization}, 15(2):213--218, 1999.

\bibitem{parallel-sorting}
Richard Cole.
\newblock Parallel merge sort.
\newblock {\em SIAM Journal on Computing}, 17(4):770--785, 1988.

\bibitem{weighted_center}
Martin Dyer.
\newblock Linear time algorithms for two- and three-variable linear programs.
\newblock {\em SIAM Journal on Computing}, 13(1):31--45, 1984.

\bibitem{higherVDcomplexity}
Der{-}Tsai Lee.
\newblock On \emph{k}-nearest neighbor voronoi diagrams in the plane.
\newblock {\em {IEEE} Trans. Computers}, 31(6):478--487, 1982.

\bibitem{parametric}
Nimrod Megiddo.
\newblock Applying parallel computation algorithms in the design of serial
  algorithms.
\newblock {\em Journal of the ACM}, 30(4):852--865, 1983.

\bibitem{Megiddo-linear}
Nimrod Megiddo.
\newblock Linear programming in linear time when the dimension is fixed.
\newblock {\em Journal of the ACM}, 31(1):114--127, 1984.

\bibitem{Megiddo-balls}
Nimrod Megiddo.
\newblock On the ball spanned by balls.
\newblock {\em Discrete \& Computational Geometry}, 4(6):605--610, 1989.

\bibitem{higherVD2}
Edgar~A. Ramos.
\newblock On range reporting, ray shooting and \emph{k}-level construction.
\newblock In {\em Proceedings of the Fifteenth Annual Symposium on
  Computational Geometry (SoCG 1999)}, pages 390--399, 1999.

\bibitem{reverse_tree}
Jianzhong Zhang, Zhenhong Liu, and Zhongfan Ma.
\newblock Some reverse location problems.
\newblock {\em European Journal of Operational Research}, 124(1):77--88, 2000.

\end{thebibliography}

\end{document}